\newcommand{\argmax}{\mathrm{argmax}}
\begin{document}
\title{
Sink Location Problems in Dynamic Flow Grid Networks\thanks{Supported by JSPS KAKENHI Grant Numbers 19H04068, 23H03349.}
}
%
%
\author{
Yuya Higashikawa \and
Ayano Nishii \and
Junichi Teruyama \and
Yuki Tokuni
}

\authorrunning{Y.Higashikawa et al.}
%
\institute{
Graduate School of Information Science, University of Hyogo, Japan\\
\email{\{higashikawa,junichi.teruyama\}@gsis.u-hyogo.ac.jp}\\
\email{\{ad23f047,af23v006\}@guh.u-hyogo.ac.jp}
}
\maketitle              
\begin{abstract}
A {\em dynamic flow network} consists of a directed graph, where nodes called {\em sources} represent locations of evacuees, and nodes called {\em sinks} represent locations of evacuation facilities.
Each source and each sink are given {\em supply} representing the number of evacuees and {\em demand} representing the maximum number of acceptable evacuees, respectively.
Each edge is given {\em capacity} and {\em transit time}.
Here, the capacity of an edge bounds the rate at which evacuees can enter the edge per unit time, and the transit time represents the time which evacuees take to travel across the edge.
The {\em evacuation completion time} is the minimum time at which each evacuees can arrive at one of the evacuation facilities.
Given a dynamic flow network without sinks, once sinks are located on some nodes or edges, the evacuation completion time for this sink location is determined.
We then consider the problem of locating sinks to minimize the evacuation completion time, called the {\em sink location problem}.
The problems have been given polynomial-time algorithms only for limited networks such as paths~\cite{higashikawa2015multiple,bhattacharya2017improved,benkoczi2021locating}, cycles~\cite{benkoczi2021locating}, and trees~\cite{mamada2006nlog2n,higashikawa2014minimax,chen2016sink}, but no polynomial-time algorithms are known for more complex network classes.
In this paper, we prove that the 1-sink location problem can be solved in polynomial-time when an input network is a grid with uniform edge capacity and transit time. 

\keywords{facility location problem \and dynamic flow \and quickest transshipment problem \and evacuation problem \and polynomial-time algorithm.}
\end{abstract}
\section{Introduction}
In recent years, natural disasters such as earthquakes, tsunamis and hurricanes have been occurring more frequently,
and in Japan, the revision of the Basic Act on Disaster Management~\cite{bousai} has accelerated the development of evacuation facilities, i.e., locating tsunami evacuation towers and setting tsunami evacuation buildings, which had been delayed.
On the other hand, when considering evacuation planning for urgent large-scale disasters such as tsunamis, floods, and nuclear power plant accidents, one of the most important issues to be considered is the delay in evacuation time due to traffic congestion.
In fact, it is known that many people were killed in the Great East Japan Earthquake due to delayed evacuation caused by traffic congestion~\cite{kinjo2011survey}.

To attack this issue, the {\em dynamic flow network model} proposed by Ford and Fulkerson~\cite{ford1958constructing} can be applied.
Since the dynamic flow network model can handle the movement of people or objects over time, it is possible to develop an evacuation plan that quantitatively takes traffic congestion into account.
A {\em dynamic flow network} consists of a directed graph, where nodes called {\em sources} represent locations of evacuees, and nodes called {\em sinks} represent locations of evacuation facilities.
Each source and each sink are given {\em supply} representing the number of evacuees and {\em demand} representing the maximum number of acceptable evacuees, respectively.
Each edge is given {\em capacity} and {\em transit time}.
The capacity of an edge bounds the rate at which evacuees can enter the edge per unit time, and the transit time represents the time which evacuees take to travel across the edge.

One of the most fundamental problems in dynamic flow networks is the {\em quickest transshipment problem}.
The objective of this problem is to compute the minimum time by which each evacuees can arrive at one of sinks, i.e., the {\em evacuation completion time}, and to find the optimal flow of evacuees that achieves the evacuation completion time. 
For the quickest transshipment problem, several strongly polynomial-time algorithms have been developed so far~\cite{kamiyama2019discrete,hoppe2000quickest,schloter2022faster}.

Given a dynamic flow network without sinks, once sinks are located on some nodes or edges, the evacuation completion time for this sink location is determined.
We then consider the problem of locating sinks to minimize the evacuation completion time, called the {\em sink location problem}.
The problems have been given polynomial-time algorithms only for limited networks such as paths~\cite{higashikawa2015multiple,bhattacharya2017improved,benkoczi2021locating}, cycles~\cite{benkoczi2021locating}, and trees~\cite{mamada2006nlog2n,higashikawa2014minimax,chen2016sink}, but no polynomial-time algorithms are known for more complex network classes.
In this paper, we address the sink location problem on grid networks.
A grid network can model an actual road network better than the networks studied so far, in that it consists of a number of cycles.
We present the first polynomial-time algorithm for the 1-sink location problem in dynamic flow grid networks with uniform edge capacity and transit time.

We describe the basic ideas of this study.
In our model, a sink can be located on a node or an edge. When a sink is located at a node, the evacuation completion time for that sink can be computed in polynomial-time using the algorithms~\cite{kamiyama2019discrete,hoppe2000quickest,schloter2022faster} for the quickest transshipment problem. Therefore, if one can compute in polynomial-time the optimal sink location for each edge, i.e., the sink location that minimizes the evacuation completion time over the points on the edge, the optimal sink location for the entire network can be obtained in polynomial-time. 
In the following, we deal with the 1-sink location problem on a particular edge of the input network.

This paper is organized as follows. 
In Section~\ref{sec:preliminaries}, we introduce notations, models and known related results used throughout the paper.
In Section~\ref{sec:grid_network}, we will give a polynomial-time algorithm for the 1-sink location problem on a particular edge.
Combining this algorithm and previous results, we will give a polynomial-time algorithm for the 1-sink location problem on a grid network.
We conclude this paper in Section~\ref{sec:conclusion} with some discussions.

\section{Preliminaries}\label{sec:preliminaries}
\subsection{Models}
Let $\mathbb{R}$ and $\mathbb{R}_+$ denote the sets of real values and non-negative real values.
A dynamic flow network $\mathcal{N}$ is given as a 6-tuple $\mathcal{N} = (G=(V,E),S^+,S^-,w,c,\tau)$, 
where $G=(V,E)$ is a directed graph with node set $V$ and edge set $E$,
$S^+\subseteq{V}$ and $S^-\subseteq V$ are sets of sources and sinks, respectively,
function $w \colon S^+ \cup S^-\rightarrow\mathbb{R}$ represents supply of a source or demand of a sink, 
function $c \colon E\rightarrow\mathbb{R}_+$ is capacity of an edge,
and function $\tau \colon E\rightarrow\mathbb{R}_+$ is transit time of an edge.
As for function $w$, for a source $s \in S^+$, $w(s) (\geq{0})$ represents supply of source $s$, and for a sink $s \in S^-$, $w(s) (< 0)$ represents demand of sink $s$.
We define $w(X)\coloneqq\sum_{s\in{X}}w(s)$ for a subset $X\subseteq{S^+ \cup S^-}$.

In this paper, we consider the problem of locating a sink in a dynamic flow grid network.
Here, let us describe an input network $\mathcal{N}$ in our problem.
First, a graph $G=(V,E)$ is a grid, where $V$ consists of $n = N \times N$ nodes and $E$ consists of bidirected edges in between each adjacent nodes of $V$ (see Fig.~\ref{fig:input}).
Furthermore, we assume that all edge capacities and transit times are uniform, thus $c$ and $\tau$ are constant functions.
In the sink location problem, input network $\mathcal{N}$ has no sinks and all nodes are treated as sources.
Therefore, $\mathcal{N}$ is represented as $(G=(V,E),S^+=V,S^-=\emptyset,w,c,\tau)$.
In the following, we abuse $c$ and $\tau$ as non-negative real constants.

\begin{figure*}[b]
    \begin{tabular}{cc}
      \begin{minipage}[t]{0.35\hsize}
        \centering
        \includegraphics[keepaspectratio, height=3.3cm]{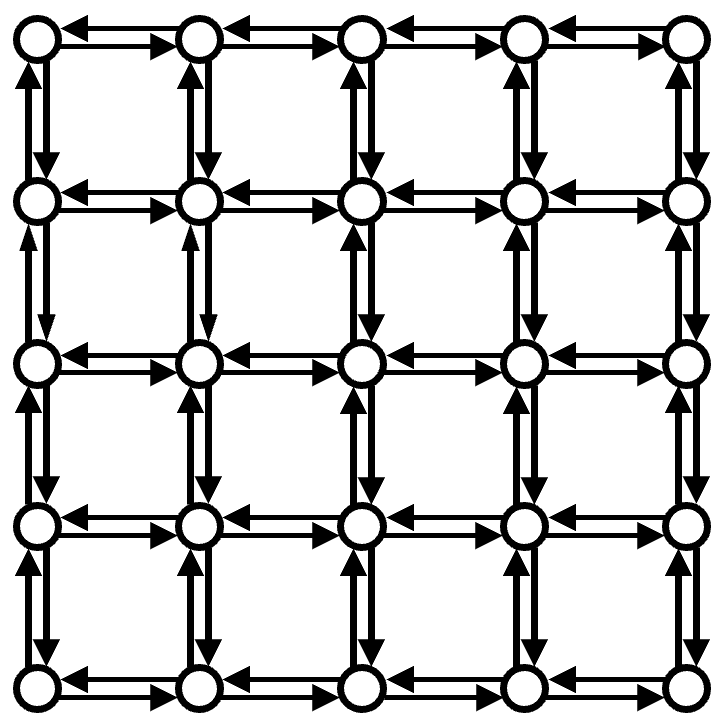}
        \caption{The input network $\mathcal{N}$}
        \label{fig:input}
      \end{minipage} &
      \begin{minipage}[t]{0.6\hsize}
        \centering
        \includegraphics[keepaspectratio, height=3.3cm]{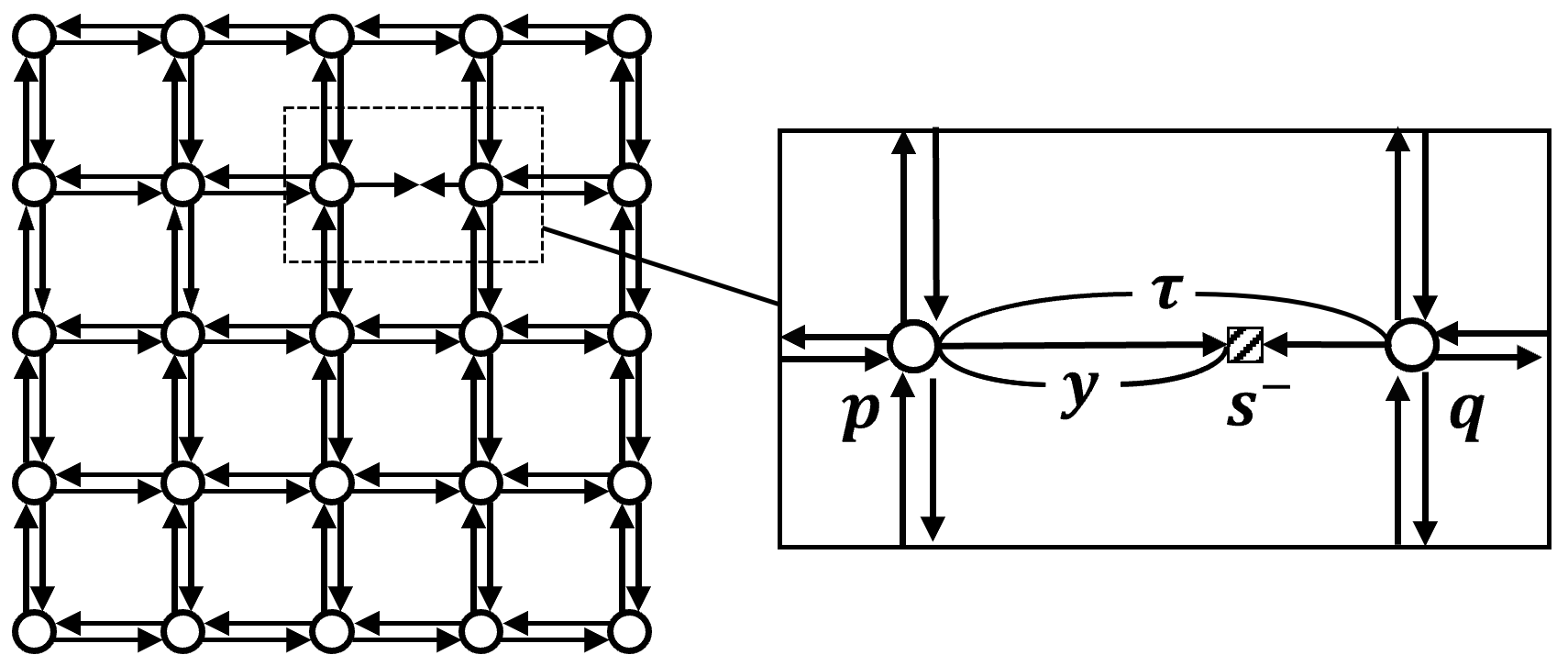}
        \caption{$\mathcal{N}(y)$}
        \label{fig:input_y}
      \end{minipage}
    \end{tabular}
\end{figure*}

Let us consider the problem of locating a sink $s^-$ in between two adjacent nodes $p,q \in V$ at a distance of $y \ (0 < y < \tau)$ from node $p$.
For a given dynamic flow grid network $\mathcal{N}$ without sinks, we apply the following operations to $\mathcal{N}$:
\begin{enumerate}
\item Remove directed edges $(p,q)$ and $(q,p)$.
\item Add a sink $s^-$ with demand $w(s^-) \! = \! - \! \sum_{v\in{V}} \! w(v)$.
\item Add directed edges $(p,s^-)$ with capacity $c$ and transit time $y$, and $(q,s^-)$ with capacity $c$ and transit time $\tau-y$.
\end{enumerate}
The dynamic flow network obtained from the above operations is denoted as $\mathcal{N}(y)$ (see Fig.~\ref{fig:input_y}). 
Hereafter, we denote the sets of nodes and edges in $\mathcal{N}(y)$ as $V(y)$ and $E(y)$, respectively. 
In the following, we consider embedding network $\mathcal{N}(y)$ in a Cartesian coordinate system so that $p$ and $q$ are mapped at $(0,0)$ and $(\tau,0)$, respectively, and the other nodes are mapped at points $(i \tau , j \tau)$ with some integers $i,j$.
For simplicity, a node mapped at $(i \tau , j \tau)$ is called node $(i,j)$.
We refer to the directions from $(0,0)$ to $(0,1)$, from $(0,1)$ to $(0,0)$, from $(0,0)$ to $(1,0)$, and from $(1,0)$ to $(0,0)$ as {\em upward}, {\em downward}, {\em rightward}, and {\em leftward}, respectively.   

\subsection{Evacuation completion time}
A {\em dynamic flow} $f$ is defined as a function $f \colon E(y) \times \mathbb{R}_+ \rightarrow \mathbb{R}_+$, where $f(e,\theta)$ represents the flow rate entering edge $e \in E(y)$ at time $\theta \in \mathbb{R}_+$.
In this paper, we deal with dynamic flows in continuous-time model~\cite{fleischer1998efficient,Higashikawa2019survey}.
Let us consider the following constraints for a dynamic flow $f$:
\begin{equation}
\begin{split}
    0 \leq f(e, \theta) \leq c
    \quad \text{for any} \ e \in E(y), \ \text{and for any} \ \theta \in \mathbb{R}_+,
    \label{eq:capacity_constraint}
\end{split}
\end{equation}
\begin{equation}
\begin{split}
    \sum_{(v,u)\in E(y)} \int_0^\theta f((v,u),t) dt - \sum_{(u,v)\in E(y)} \int_0^{\theta-\tau} f((u,v),t) dt \leq w(v)&\\
    \text{for any} \ v \in S^+ (=V(y) \setminus \{s^-\}), \ \text{and for any} \ \theta \in \mathbb{R}_+. \hspace{-20pt}&
    \label{eq:flow_conservation}
\end{split}
\end{equation}
The constraints \eqref{eq:capacity_constraint} and \eqref{eq:flow_conservation} are called the capacity constraint and the conserve constraint.  
The conserve constraint \eqref{eq:flow_conservation} means that for any time $\theta$ and any source $v$, the amount of flow out of $v$ within time $\theta$ is at most the amount of flow entering $v$ within $\theta$ plus the amount of the supply at $v$. 
Furthermore, for a {\em time horizon} $T \in \mathbb{R}_+$, consider the following constraint:
\begin{equation}
\begin{split}
    \sum_{(v,u)\in E(y)} \int_0^T f((v,u),t) dt - \sum_{(u,v)\in E(y)} \int_0^{T-\tau} f((u,v),t) dt &= w(v)\\
    \text{for any} \ v \in S^+ (=V(y) \setminus \{s^-\}),  \hspace{-55pt}&\\
    \int_0^{T-y} f((p,s^-),t) dt + \int_0^{T-(\tau-y)} f((q,s^-),t) dt &= -w(s^-).
    \label{eq:demand_constraint}
\end{split}
\end{equation}
The constraint \eqref{eq:demand_constraint} implies that for each node $v$, the net amount of flow accumulated at $v$ within time $T$ equals its supply or demand.
For a time horizon $T$, if a dynamic flow $f$ on $\mathcal{N}(y)$ satisfies the above constraints \eqref{eq:capacity_constraint}, \eqref{eq:flow_conservation}, and \eqref{eq:demand_constraint}, 
then $f$ is said to be {\em feasible} for $T$.
Moreover, a time horizon $T$ for which a dynamic flow is feasible is said to be feasible.
%
The evacuation completion time denotes the minimum time for which a feasible dynamic flow exists. 
Letting $\Theta^*(y)$ denote the evacuation completion time in $\mathcal{N}(y)$, the 1-sink location problem between $p$ and $q$ is formulated as follows:
\begin{alignat*}{2}
    \quad & \text{minimize}   \quad && \Theta^*(y)\\
          & \text{subject to} \quad && 0 < y < \tau,
\end{alignat*}
which we call {\sc Sink-Location-on-Edge} (for short, {\sf SLE}).
Next, we describe known properties of the evacuation completion time.
Given a dynamic flow network $\mathcal{N}(y)$, for a subset of the sources $X \subseteq S^+$ and a time horizon $\theta \in \mathbb{R}_+$, let $o_\theta(X,y)$ be the maximum amount of dynamic flow that can reach the sink $s^-$ from $X$ within time horizon $\theta$ (more formal definition in our model will be given in \eqref{eq:o_theta_2}).
The following theorem by Hoppe and Tardos~\cite{hoppe2000quickest} says a property of a feasible time horizon.
%
%
\begin{theorem}[\cite{hoppe2000quickest}]
\label{thm:Klinz}
Given a dynamic flow network $\mathcal{N}(y)$ for a real value $y$ with $0 < y < \tau$, and a time horizon $\theta \in \mathbb{R}_+$, 
there exists a feasible dynamic flow for $\theta$ on $\mathcal{N}(y)$ if and only if
\begin{equation}
    \min\{o_\theta(X,y)-w(X) \mid X \subseteq{S^+}\} \geq 0
    \label{eq:Klinz}
\end{equation}
holds.
%
\end{theorem}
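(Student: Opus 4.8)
The plan is to convert this continuous-time feasibility question into a static maximum-flow/minimum-cut statement on a time-expanded network, and then to read off condition~\eqref{eq:Klinz} directly from the structure of a minimum cut. Write $W \coloneqq w(S^+) = -w(s^-)$ for the total supply, and recall that $o_\theta(X,y)$ is by definition the value of a maximum dynamic flow shipping the sources in $X$ to $s^-$ within time $\theta$. The ``only if'' direction is the routine one, so I would dispatch it first. Suppose $f$ is feasible for $\theta$, so that all $W$ units of supply reach $s^-$ by time $\theta$. Fixing any $X \subseteq S^+$ and decomposing $f$ into source-to-sink flow paths together with their activation times, the subcollection of paths whose origin lies in $X$ is itself a dynamic flow that ships exactly $w(X)$ units from $X$ to $s^-$ within $\theta$ and still obeys the capacity constraint~\eqref{eq:capacity_constraint}. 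Hence $w(X)$ is attainable, and by maximality $o_\theta(X,y) \ge w(X)$; as $X$ is arbitrary, \eqref{eq:Klinz} follows.

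For the ``if'' direction I would argue through a time-expanded network $G^\theta$: discretize the horizon $[0,\theta]$ into layers, place one copy of each node of $V(y)$ in each layer, replace each edge of $E(y)$ by the corresponding holdover and transit arcs of capacity $c$, and attach a super-source $s^\ast$ with an arc $(s^\ast, v_0)$ of capacity $w(v)$ to the layer-$0$ copy of each source $v$. A standard equivalence between continuous dynamic flows and static flows on the time expansion shows that a dynamic flow feasible for $\theta$ exists if and only if the maximum $s^\ast$--$s^-$ flow in $G^\theta$ equals $W$. Now consider any $s^\ast$--$s^-$ cut $(A,B)$ with $s^\ast \in A$ and $s^- \in B$, and let $X \subseteq S^+$ collect the sources $v$ whose arc $(s^\ast, v_0)$ is \emph{not} cut. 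The cut capacity splits as the cut super-source arcs, contributing $W - w(X)$, plus the arcs of $G^\theta$ crossing from $A$ to $B$, which form a cut separating the layer-$0$ copies of $X$ from $s^-$ and hence contribute at least $o_\theta(X,y)$, with equality attained by taking the minimum such internal cut. Minimizing over all cuts therefore gives
\begin{equation}
    \text{(max } s^\ast\text{--}s^-\text{ flow)} \;=\; W + \min\{\, o_\theta(X,y) - w(X) \mid X \subseteq S^+ \,\},
\end{equation}
so the maximum flow equals $W$ precisely when \eqref{eq:Klinz} holds, which closes both directions at once.

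The genuinely delicate point is not the combinatorial decoding above but the passage between the continuous-time model and the finite network $G^\theta$, since maximum-flow/minimum-cut is classical only for finite static graphs. I would need to justify that an optimal continuous dynamic flow can be realized on a finite time expansion without loss, appealing to the discrete/continuous correspondence for dynamic flows, and that the internal minimum cut separating $X$ from $s^-$ in $G^\theta$ evaluates exactly to $o_\theta(X,y)$ rather than to a smaller quantity introduced by the temporal holdover arcs. Establishing this precise identification of the internal cut value with $o_\theta(X,y)$ is where the argument must be carried out with care; once it is in hand, the displayed min-cut formula yields the theorem immediately.
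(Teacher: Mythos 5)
The paper does not actually prove Theorem~\ref{thm:Klinz}: it is imported as a black box from Hoppe and Tardos~\cite{hoppe2000quickest} (the criterion is originally due to Klinz), so there is no in-paper argument to compare yours against. Judged on its own, your sketch reconstructs the standard proof from that literature, and its skeleton is sound: the ``only if'' direction by restricting a feasible flow to the sub-flow originating in $X$, and the ``if'' direction by max-flow/min-cut on a time-expanded network with a super-source $s^\ast$, where a cut is decoded as $(W-w(X))$ from the severed super-source arcs plus an internal cut of value at least $o_\theta(X,y)$, yielding the identity $\min \text{cut} = W + \min\{o_\theta(X,y)-w(X) \mid X \subseteq S^+\}$ and hence \eqref{eq:Klinz}. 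Two cautions beyond the gap you already flag. First, your holdover arcs must be \emph{uncapacitated}, not of capacity $c$: the conservation constraint \eqref{eq:flow_conservation} permits unlimited storage at nodes, and capping waiting flow at $c$ would make the time expansion model a different (more constrained) problem, breaking the equivalence. Second, the continuous/discrete passage you defer is genuinely the crux here and not merely routine: $\theta$, $y$, and $\tau$ are arbitrary reals, so one must either rescale to a common rational grid or invoke the continuous--discrete correspondence of Fleischer and Tardos~\cite{fleischer1998efficient} in a form that handles irrational data, and the ``only if'' decomposition of a continuous dynamic flow into timed source-to-sink paths is itself most cleanly justified through that same correspondence. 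With those two points repaired and the cut-value identification carried out, your argument is a complete and correct proof of the cited theorem.
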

Therefore, the evacuation completion time $\Theta^*(y)$ in $\mathcal{N}(y)$ is the minimum $\theta$ that satisfies \eqref{eq:Klinz}.
Here, we define
\begin{equation}
    \Theta(X,y) \coloneqq \min\{\theta \mid o_\theta(X,y)-w(X)\geq{0}\}
    \label{eq:Theta_(X,y)}
\end{equation}
and then, because $o_\theta(X,y)-w(X)$ is monotonically non-decreasing in $\theta \in \mathbb{R}_+$ for any $X \subseteq S^+$ and any $y$, we have
\begin{equation}
    \Theta^*(y)=\max\{\Theta(X,y) \mid {X}\subseteq{S^+}\}.
    \label{eq:Theta^*(y)}
\end{equation}

\subsection{Residual networks}
The dynamic flow network $\mathcal{N}(y)$ can be treated as a static flow network $\overline{\mathcal{N}}(y)$ by considering the transit time as the cost of each edge.
We say that a static flow $\bar{f} \colon E(y) \rightarrow \mathbb{R}_+$ is feasible with respect to $X \subseteq S^+$ if $\bar{f}$ satisfies the following conditions:
\begin{align*}
    0 \leq \bar{f}(e) &\leq c
    && \text{for any} \ e \in E(y),\\
    \sum_{(v,u) \in E(y)} \bar{f}(v,u) - \sum_{(u,v) \in E(y)} \bar{f}(u,v) &= 0 
    && \text{for any} \ v \in V(y) \setminus (X \cup \{s^-\}),\\
    \sum_{v\in X} \sum_{(v,u) \in E(y)} \bar{f}(v,u) - 
    \sum_{u \in \{p,q\}} \bar{f}(u,s^-) &= 0. 
    &&\quad  
\end{align*}
When a feasible static flow $\bar{f} \colon E(y) \rightarrow \mathbb{R}_+$ is given to $\overline{\mathcal{N}}(y)$, the {\em residual network} with respect to $\bar{f}$, denoted by  $\overline{\mathcal{N}}(y)_{\bar{f}}$, is constructed as follows.
For each edge $e$ with $\bar{f}(e) > 0$, a reverse edge $\overset{\leftarrow}{e}$ is added, and the edge set obtained by adding such reverse edges to $E(y)$ is denoted by $E(y)_{\bar{f}}$.
The capacity of residual edges $c_{\bar{f}} \colon E(y)_{\bar{f}}\rightarrow{\mathbb{R}_+}$ is defined so that $c_{\bar{f}}(e)=c-{\bar{f}}(e)$ for original edges $e \in E(y)$, and $c_{\bar{f}}(\overset{\leftarrow}{e})={\bar{f}}(e)$ for reverse edges $\overset{\leftarrow}{e} \in E(y)_{\bar{f}} \setminus E(y)$.
The cost of residual edges $\tau_{\bar{f}} \colon E(y)_{\bar{f}}\rightarrow{\mathbb{R}}$ is defined as follows:
For edges $(p,s^-)$, $(s^-,p)$, $(q,s^-)$, and $(s^-,q)$, $\tau_{\bar{f}}(p,s^-) = y$, $\tau_{\bar{f}}(s^-,p) = - y$, $\tau_{\bar{f}}(q,s^-) = \tau - y$, and $\tau_{\bar{f}}(s^-,q)= - \tau + y$ respectively.
For other original edges $e \in E(y)$, $\tau_{\bar{f}}(e) = \tau$, and for reverse edges $\overset{\leftarrow}{e} \in E(y)_{\bar{f}} \setminus E(y)$, $\tau_{\bar{f}}(\overset{\leftarrow}{e}) = -\tau$.


\subsection{Envelope of two-dimensional line segments}
Let $\mathcal{F}$ be a family of one-variable linear functions $f_1, \ldots, f_n$ defined on closed intervals $[a_1,b_1], \ldots,$ $[a_n,b_n]$, respectively.
We simply refer to such $\mathcal{F}$ as a set of two-dimensional line segments.
Moreover, for each function $f\in\mathcal{F}$ defined on a closed interval $[a,b]$, we define $\underline{f}(x)$ as follows:
\begin{equation*}
    \underline{f}(x) \coloneqq \left\{
    \begin{array}{ll}
    \! f(x) & (x \in [a,b]), \\
    \! -\infty & (x \notin [a,b]).
    \end{array}
    \right.
\end{equation*}
The {\em upper envelope} $U_{\mathcal{F}}$ of the two-dimensional line segments $\mathcal{F}$ is defined as
\begin{equation*}
    U_{\mathcal{F}}(x)  \coloneqq \max_{1\leq{i}\leq{n}} \underline{f_i}(x) 
\end{equation*}
that is defined on $\bigcup_{1 \le i \le n} [a_i,b_i]$.
A point $(x',U_{\mathcal{F}}(x'))$ in the upper envelope $U_{\mathcal{F}}$ is called a {\em break point} if a function constituting $U_{\mathcal{F}}$ switches at $x'$, that is, if
\begin{equation*}
    \lim_{x \to x'+0} \argmax_{1\leq{i}\leq{n}} \underline{f_i}(x) 
    \neq \lim_{x \to x'-0} \argmax_{1\leq{i}\leq{n}} \underline{f_i}(x).
\end{equation*}
It is known that the following theorem holds for the upper envelope $U_{\mathcal{F}}$ of the set of two-dimensional line segments $\mathcal{F}$.
\begin{theorem} 
[\cite{hart1986nonlinearity,hershberger1989finding}]
\label{thm:upperenvelope}
Given a set $\mathcal{F}$ containing $n$ two-dimensional line segments, the upper envelope $U_{\mathcal{F}}$ has at most $O(n\alpha(n))$ break points, and they can be computed in $O(n\log n)$ time.
Here, the function $\alpha$ denotes the inverse function of the Ackermann function.
\end{theorem}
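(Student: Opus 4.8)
The statement separates into two essentially independent claims: a combinatorial bound of $O(n\alpha(n))$ on the number of break points, and an algorithmic bound of $O(n\log n)$ on the time to compute them. The plan is to derive the first from the theory of Davenport--Schinzel sequences and the second from a divide-and-conquer computation of the envelope.

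For the combinatorial bound, I would record the sequence of indices $\sigma = (i_1, i_2, \ldots)$ of the segments that successively realize the maximum as $x$ sweeps across $\bigcup_{1 \le i \le n}[a_i,b_i]$, contracting equal consecutive entries so that $i_k \ne i_{k+1}$; each break point then corresponds to exactly one transition in $\sigma$, so it suffices to bound $|\sigma|$. The crucial geometric input is that any two functions $f_a, f_b \in \mathcal{F}$ are linear, whence $f_a - f_b$ is linear and changes sign at most once. I would use this to forbid any alternating subsequence $a\,b\,a\,b\,a$ in $\sigma$: the two outer occurrences of $a$ force, by the convexity of its domain, that $f_a$ is defined throughout the middle stretch, and likewise the two inner occurrences of $b$ force $f_b$ defined there, so on that common interval the comparisons read $f_b \ge f_a$, then $f_a \ge f_b$, then $f_b \ge f_a$ at the three middle witnesses --- two sign changes of the linear function $f_a - f_b$, a contradiction. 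Hence $\sigma$ is a Davenport--Schinzel sequence of order $3$ on $n$ symbols, and by the Hart--Sharir bound~\cite{hart1986nonlinearity} its length is $\lambda_3(n) = O(n\alpha(n))$, giving the claimed number of break points.

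For the algorithmic bound I would compute $U_{\mathcal{F}}$ by divide and conquer: partition $\mathcal{F}$ into two halves, recursively compute the upper envelope of each, and merge the two resulting envelopes. Since each envelope is an $x$-monotone piecewise-linear curve, the merge is a single synchronized left-to-right sweep that, at every $x$, keeps whichever of the two current pieces is higher and emits a break point wherever the winner switches. A direct analysis of this recurrence yields $O(\lambda_3(n)\log n) = O(n\alpha(n)\log n)$, and the remaining work is Hershberger's refinement~\cite{hershberger1989finding} of the merge, which charges its cost so as to remove the spurious $\alpha$ factor and attain $O(n\log n)$; I would note that this running time already dominates the output size, since $n\alpha(n) = o(n\log n)$.

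The genuinely hard parts are exactly the two external ingredients. Establishing the tight bound $\lambda_3(n) = O(n\alpha(n))$ --- rather than a weaker near-linear estimate --- is the deep combinatorial heart of Davenport--Schinzel theory and is where the inverse Ackermann function enters; I would invoke it as a black box from~\cite{hart1986nonlinearity}. Correspondingly, shaving the $\alpha(n)$ factor off the naive divide-and-conquer time to reach the optimal $O(n\log n)$ requires Hershberger's careful merge and its amortized accounting~\cite{hershberger1989finding}, which I would likewise cite rather than reprove. The contribution of my argument is thus the reduction itself: identifying the envelope's transition sequence as an order-$3$ Davenport--Schinzel sequence and setting up the divide-and-conquer so that these two cited results apply verbatim.
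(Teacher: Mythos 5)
The paper does not prove this theorem at all: it is imported as a known result, cited to exactly the two papers you name, with Hart--Sharir supplying the $\lambda_3(n)=O(n\alpha(n))$ bound and Hershberger the $O(n\log n)$ algorithm. Your proposal is therefore consistent with the paper's treatment and in fact more detailed --- your reduction of the envelope's transition sequence to an order-$3$ Davenport--Schinzel sequence (via the forbidden $a\,b\,a\,b\,a$ alternation, using that segment domains are intervals and that $f_a-f_b$ is linear) is the standard, correct bridge to those cited results.
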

In the upper envelope $U_{\mathcal{F}}$ of a two-dimensional line segments $\mathcal{F}$, a line segment connects each pair of adjacent break points.
Then, this implies that some of break points gives the minimum value of $U_{\mathcal{F}}$, and we have the following corollary.
\begin{corollary}
\label{cor:upperenvelope}
Given a two-dimensional line segments $\mathcal{F}$ containing $n$ line segments, the break point that minimizes the upper envelope $U_{\mathcal{F}}$ can be computed in $O(n\log n)$ time.
\end{corollary}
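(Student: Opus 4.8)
The plan is to derive the corollary directly from Theorem~\ref{thm:upperenvelope}, exploiting the piecewise-linear structure of the upper envelope. First I would invoke Theorem~\ref{thm:upperenvelope} to compute $U_{\mathcal{F}}$ explicitly in $O(n \log n)$ time, obtaining the sorted sequence of its $O(n\alpha(n))$ break points together with, for each maximal interval between two consecutive break points, the index of the unique segment $f_i$ that realizes the envelope on that interval.

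The key observation is that $U_{\mathcal{F}}$ is piecewise linear: on each maximal interval delimited by consecutive break points it coincides with a single linear function $f_i$ and is therefore linear there. A linear function restricted to a closed interval attains its minimum at one of the two endpoints. Consequently the global minimum of $U_{\mathcal{F}}$ over its domain is attained either at a break point or at an endpoint of one of the constituting intervals $[a_i,b_i]$.

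I would therefore form the candidate set consisting of all break points together with the $2n$ interval endpoints $a_1, b_1, \ldots, a_n, b_n$ (discarding any endpoint lying outside the domain), evaluate $U_{\mathcal{F}}$ at each candidate using the realizing-segment information recorded above, and return the candidate of smallest value. Since there are $O(n\alpha(n))$ break points and $O(n)$ interval endpoints, and each evaluation can be charged $O(\log n)$ for locating the containing interval by binary search in the sorted break-point sequence, this step costs $O(n \log n)$ time.

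Finally, to confirm the stated bound I would note that $\alpha(n) = o(\log n)$, so the whole computation — the $O(n\log n)$ envelope construction plus the $O(n\log n)$ candidate scan — runs in $O(n\log n)$ time. The only point requiring care is the treatment of the domain boundary: because the domain $\bigcup_i [a_i,b_i]$ may be disconnected and a break point is defined by a switch of the realizing segment, the extreme endpoints of each connected component need not themselves be break points even though the minimum may occur there; including the $2n$ interval endpoints among the candidates closes this gap, and the argument is otherwise routine.
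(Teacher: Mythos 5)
Your proof is correct and takes essentially the same route as the paper: invoke Theorem~\ref{thm:upperenvelope} to obtain the $O(n\alpha(n))$ break points in $O(n\log n)$ time, use the piecewise linearity of $U_{\mathcal{F}}$ to argue the minimum lies among a candidate set, and scan the candidates within the stated bound. Your one addition---including the $2n$ segment endpoints as candidates because endpoints of connected components of the domain need not be break points under the paper's switching definition---is a harmless refinement of a boundary subtlety that the paper's own one-line justification (``a line segment connects each pair of adjacent break points, so some break point attains the minimum'') glosses over, not a genuinely different approach.
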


\section{Sink location on an edge}\label{sec:grid_network}
In this section, we propose a polynomial-time algorithm for problem {\sf SLE}.
According to \eqref{eq:Theta^*(y)}, if we can compute the function $\Theta(X,y)$ in $y$, for all $X \subseteq S^+$, then the objective function $\Theta^*(y)$ can be represented as the upper envelope of $\Theta(X,y)$ over $0 < y < \tau$.
However, this brute-force method does not provide a polynomial-time algorithm since there are exponentially many subsets $X$.
On the contrary, we give a family of $O(\sqrt{n})$ source sets that must contain $X$ maximizing $\Theta(X,y)$ for any $y$ with $0 < y < \tau$, and based on this property, provide a polynomial-time algorithm.
In Section \ref{sec:function_property}, we show properties of $\Theta(X,y)$, in Section \ref{sec:domination_property}, we show properties of $X$ that maximizes $\Theta(X,y)$, and in Section \ref{sec:algorithm}, a polynomial-time algorithm is provided.



\subsection{Properties of $\Theta(X,y)$} 
\label{sec:function_property}
First, we introduce the notation used in the following.
For any two sources $u,v \in S^+$, we define $d(u,v)$ as the length of the shortest path from $u$ to $v$ in the static flow network $\overline{\mathcal{N}}(y)$. 
Note that the shortest path from $u$ to $v$ does not use edges $(p,s^-)$ and $(q,s^-)$ in $E(y)$. For any source set $X \subseteq S^+$ and source $v \in S^+$, we define $d(X,v)$ as the length of the shortest path from $X$ to $v$ in $\overline{\mathcal{N}}(y)$, that is,
\begin{equation}
    d(X,v) \coloneqq \min\{d(x,v) \mid x\in{X}\}.
    \label{eq:d_uv}
\end{equation}

Here, we discuss the properties of $o_\theta(X,y)$ which represents the maximum amount of dynamic flow that can reach the sink $s^-$ from $X$ within time horizon $\theta$. 
It is known that $o_\theta(X,y)$ is characterized by a minimum-cost flow from $X$ to $S^-$ in the static flow network $\overline{\mathcal{N}}(y)$~\cite{ford1958constructing}.
Specifically, given $X \subseteq S^+ \cup S^-$,
$o_\theta(X,y)$  as a function in $\theta$ can be computed by applying the successively shortest path algorithm ~\cite{jewell1958optimal,busacker1960procedure,iri1960new} for the minimum-cost flow problem in the following manner.
Initially, set $\bar{f}$ as a zero static flow, i.e., $\bar{f} : E(y) \rightarrow 0$. 
At each step $i \ (\geq 1)$, execute the following two procedures.
\begin{enumerate}
    \item Find the shortest (i.e., minimum-cost) path $P_i(X,y)$ from $X$ to $s^-$ in the current residual network $\overline{\mathcal{N}}(y)_{\bar{f}}$. If there is no such path, then break the iteration.
    \item Update $\bar{f}$ by adding a static flow of amount $c$ along path $P_i(X,y)$.
\end{enumerate}
%
Note that the number of paths obtained when the above iteration halts is exactly two.
Letting $|P_1(X,y)|$ and $|P_2(X,y)|$ be the lengths of paths $P_1(X,y)$ and $P_2(X,y)$, respectively, $o_\theta(X,y)$ is represented by the following equation.
\begin{equation}
\begin{split}
    o_{\theta}(X,y) = 
    \left\{
    \begin{array}{ll}
    0
    & 0 \leq \theta < |P_1(X,y)|,\\
    c(\theta  -  |P_1(X,y)|)
    & |P_1(X,y)| \leq \theta < |P_2(X,y)|,\\
    c(\theta - |P_1(X,y)|) + c(\theta - |P_2(X,y)|) \:\:
    &|P_2(X,y)| \leq \theta.
    \end{array}
    \right.
    \label{eq:o_theta_2}
\end{split}
\end{equation}
Since \eqref{eq:o_theta_2} implies that $o_\theta(X,y)$ is a piecewise linear convex function in $\theta$, it can be transformed into the following:
\begin{equation}
\begin{split}
    o_\theta(X,y) =  
    \max \left\{ 0, c(\theta \!-\!|P_1(X,y)|), 
    c(\theta \!-\! |P_1(X,y)|)+c(\theta \!-\! |P_2(X,y)|)
    \right\}.
    \label{eq:o_theta_0}
\end{split}
\end{equation}

Note that $P_1(X,y)$ clearly does not contain any reverse edge 
since $P_1(X,y)$ is a path in $\overline{\mathcal{N}}(y)$.
In fact, not only that, there exists $P_2(X,y)$ not containing any reverse edge, which follows Lemma~\ref{path} below.
Thus, it is not necessary to consider the residual network when calculating $o_\theta(X,y)$.
Here, we define $d_1(X,y)$ as the minimum of the shortest path lengths that pass through $p$ or $q$, respectively, from $X$ to $s^-$ on $\overline{\mathcal{N}}(y)$. 
We define $d_2(X,y)$ as the maximum of these shortest path lengths. 
We thus have
\begin{equation}
\begin{split}
    d_1(X,y) &\coloneqq \min\{d(X,p)+y,d(X,q)+\tau-y\}, \text{ and}\\
    d_2(X,y) &\coloneqq \max\{d(X,p)+y,d(X,q)+\tau-y\}.
    \label{eq:d_1,d_2}
\end{split}
\end{equation}

As for paths in $\overline{\mathcal{N}}(y)$ corresponding to $d_1(X,y)$ and $d_2(X,y)$,
we present the following lemma.

\begin{lemma}\label{path}
For any source set $X\subseteq{S^+}$, there exists on $\overline{\mathcal{N}}(y)$ a pair of edge-disjoint paths from $X$ to $y$ whose lengths are $d_1(X,y)$ and $d_2(X,y)$, respectively.
\end{lemma}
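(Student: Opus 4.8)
The plan is to reduce the statement to a purely combinatorial claim about the grid and then settle that claim by a connectivity argument whose heart is the removal of the edge $(p,q)$.

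First I would observe that every path from $X$ to $s^-$ in $\overline{\mathcal{N}}(y)$ uses exactly one of the two sink edges $(p,s^-)$ and $(q,s^-)$, of costs $y$ and $\tau-y$. Hence a path through $p$ has length at least $d(X,p)+y$, and one through $q$ at least $d(X,q)+\tau-y$, with equality if and only if its part before the sink edge is a shortest $X$-$p$ (resp.\ $X$-$q$) path. Since $\{d_1(X,y),d_2(X,y)\}=\{d(X,p)+y,\,d(X,q)+\tau-y\}$ by \eqref{eq:d_1,d_2}, producing two edge-disjoint paths of lengths $d_1$ and $d_2$ is equivalent to producing a shortest $X$-$p$ path $R_p$ and a shortest $X$-$q$ path $R_q$ that are edge-disjoint in the grid part; appending the two distinct sink edges then yields the desired pair, edge-disjointness being preserved because $R_p$ and $R_q$ terminate with different sink edges. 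So the lemma reduces to: in $\overline{\mathcal{N}}(y)$ (the grid with edge $(p,q)$ deleted) there exist edge-disjoint shortest paths from $X$ to $p$ and from $X$ to $q$. This is also exactly the fact that lets one route both augmenting paths forward, matching the earlier remark that $P_2(X,y)$ can be taken without reverse edges.

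Second, I would recast this as a flow problem on the shortest-path structure. Put $\phi(v)\coloneqq d(X,v)$ and orient every grid edge $\{a,b\}$ as $a\to b$ whenever $\phi(b)=\phi(a)+\tau$; call the resulting acyclic graph $D$. By construction the shortest $X$-$p$ (resp.\ $X$-$q$) paths are exactly the directed $X$-$p$ (resp.\ $X$-$q$) paths in $D$. Give every edge of $D$ unit capacity, add a super-source joined to all nodes of $X$ by edges of infinite capacity, and a super-sink $\rho$ fed only by the two unit-capacity edges $p\to\rho$ and $q\to\rho$. An integral flow of value $2$ must saturate both of these, so it decomposes into one tight $X$-$p$ path and one tight $X$-$q$ path that are automatically edge-disjoint; hence by the max-flow/min-cut theorem it suffices to show that every source-sink cut has at least two edges.

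Third --- and this is the step I expect to be the main obstacle --- I would rule out a cut of size one. Cutting $p\to\rho$ or $q\to\rho$ alone fails because the other target is still reachable ($d(X,p)$ and $d(X,q)$ are both finite), so a unit cut would have to be a single internal edge $e=(u,v)$ lying simultaneously on every shortest path to $p$ and on every shortest path to $q$. I would show no such $e$ exists, and here the deletion of $(p,q)$ is essential. Let $L$ be the vertical line $x=\tfrac12$ separating the column of $p$ from that of $q$. Any $X$-path reaching the target on the far side of $L$ must cross $L$, and since the only row-$0$ crossing edge (the deleted $(p,q)$) is gone, it must cross through some edge joining $(0,j)$ and $(1,j)$ with $j\neq0$; this forces the far approach onto edges disjoint from the near target's approach, yielding two independent corridors to $\{p,q\}$ and precluding a common bottleneck edge. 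Without the deletion the two approaches could share the row-$0$ corridor (for instance $X=\{(5,0)\}$), and a unit cut would indeed exist, so the deletion is exactly what the argument needs. Making this rigorous requires a short case analysis on which side of $L$ contains the relevant nearest source and on the crossing row, together with the observation that away from $(p,q)$ the grid is richly connected, so any interior tight edge can be bypassed by another tight path. Once the minimum cut is shown to be at least $2$, Menger's theorem supplies the two edge-disjoint tight paths $R_p$ and $R_q$, and appending the sink edges completes the proof.
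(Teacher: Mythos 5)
Your first two paragraphs are sound and give a clean reformulation: the lemma is equivalent to finding edge-disjoint shortest $X$-$p$ and $X$-$q$ paths in the grid with $(p,q)$ deleted, and via the tight-edge DAG and max-flow/min-cut this becomes the claim that no single grid edge lies on every shortest $X$-$p$ path and simultaneously on every shortest $X$-$q$ path. This framing is genuinely different from the paper's, which instead pushes a flow of value $c$ along one explicitly chosen shortest $X$-$p$ path (the one bending at $(i,0)$) and then proves, by a region-by-region count of leftward, rightward, upward and downward edges, that a shortest $X$-to-$q$ path in the resulting residual network needs no reverse edge, hence is edge-disjoint from the first path.

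The genuine gap is in your third paragraph, which is exactly where the content of the lemma lies. Your line-$L$ argument only establishes that the edge on which a far-side path crosses $L$ is not an edge of the near target's approach --- which is immediate, since paths to the near target need not cross $L$ at all. It does not address the real danger: a common bottleneck on the near side of $L$, used by every shortest path to $p$ and by every shortest path to $q$ before the latter crosses. Such sharing really does occur for individual paths: with $X=\{(-3,0)\}$, the shortest path to $p$ is the unique row-$0$ path, while $(-3,0)\to(-2,0)\to(-1,0)\to(0,0)\to(0,1)\to(1,1)\to(1,0)$ is a shortest path to $q$ (it crosses $L$ at row $1\neq 0$, as your argument prescribes) that shares three edges with it. So ``crossing off row $0$'' does not by itself ``force the far approach onto edges disjoint from the near target's approach''; what must be shown is that some shortest path to $q$ can be rerouted off \emph{all} edges of some shortest path to $p$. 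Your proposal acknowledges this by invoking the ``observation'' that any interior tight edge can be bypassed by another tight path, but that observation \emph{is} the statement to be proven, not something one may assume: as you yourself note, it fails when $(p,q)$ is not deleted, so it cannot follow from generic connectivity of the grid and must be argued in a way that exploits the deletion. This rerouting step --- carried out in the paper via the explicit residual-network construction and the counting bound $|P_q|\geq(2r_1+2d_2+|i'-1|+|j'|)\tau$ --- is the actual work of the proof, and in your proposal it remains an unproven assertion.
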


\begin{proof}
\begin{figure*}[b]
    \centering
    \includegraphics[width=0.7\hsize, keepaspectratio]{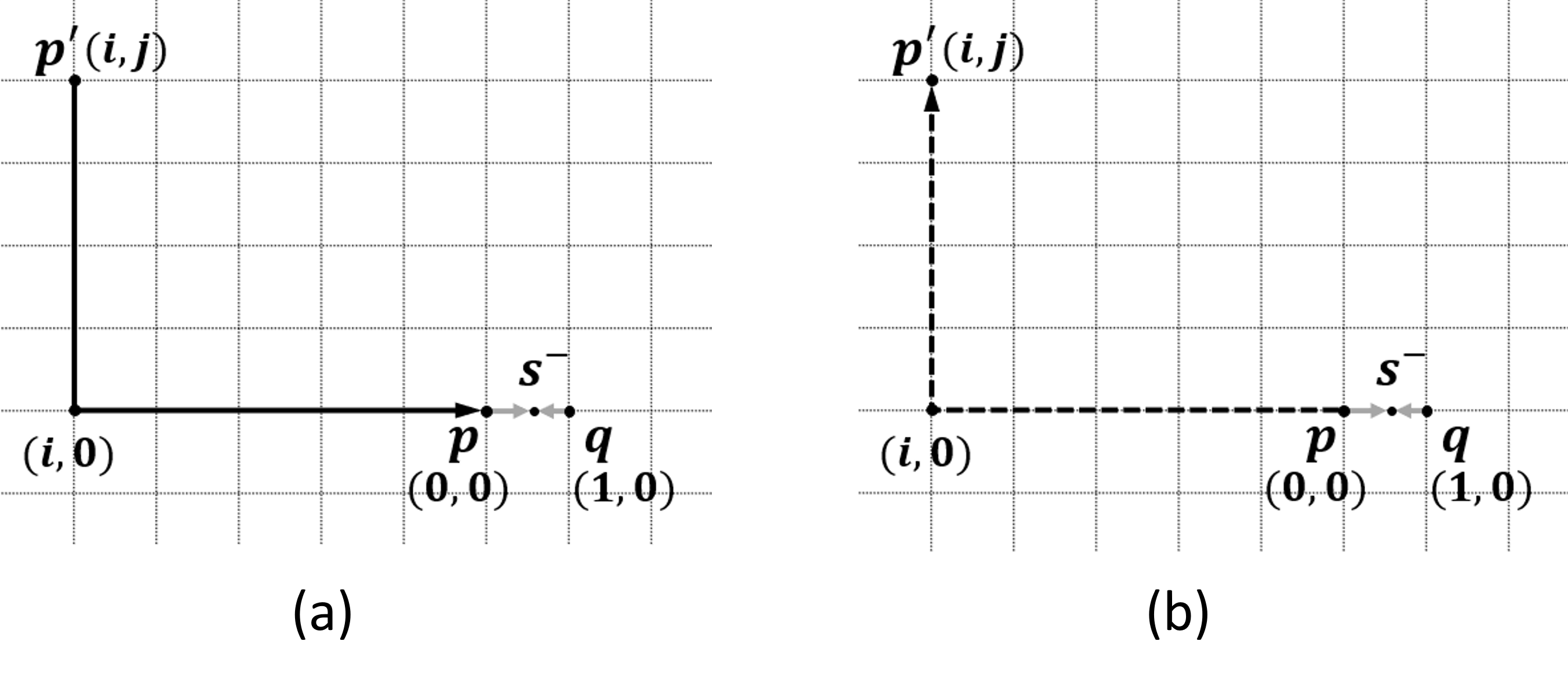}
    \caption{(a) The shortest path in $\mathcal{N}(y)$ from $p'$ to $p$ that passes through $(i,0)$. \\
    (b) Positions where reverse edges exist in the residual network $\overline{\mathcal{N}}(y)_{\bar{f}}$.}
    \label{fig:path_u}
\end{figure*}

First, for any pair of paths corresponding to $d_1(X,y)$ and $d_2(X,y)$, if one path passes through $p$, then the other path passes through $q$. 
In other words, if one path includes the directed edge $(p,s^-)$, then the other path includes the directed edge $(q,s^-)$.
Therefore, from equation \eqref{eq:d_1,d_2}, it is sufficient to show the existence of a pair of shortest paths from $X$ to $p$ and $X$ to $q$ in $\overline{\mathcal{N}}(y)$ that are edge-disjoint.
In the following discussion, we only consider the case where $d(X,p) \leq d(X,q)$. 
Note that a similar argument holds even when $d(X,p) > d(X,q)$.

Let $p'=(i,j)\in X$ be the starting node of the shortest path from $X$ to $p$.
Note that due to the assumption $d(X,p) \leq d(X,q)$, we have $i\leq 0$.
Among shortest paths from $X$ to $p$, we choose the unique path passing through node $(i,0)$, and let $\bar{f}$ be the static flow of amount $c$ along this path (see Fig.~\ref{fig:path_u}(a)).

In the residual network $\overline{\mathcal{N}}(y)_{\bar{f}}$ (see Fig.~\ref{fig:path_u}(b)), let $q'=(i',j') \in X$ be the starting node of the shortest path from $X$ to $q$.
Let $P^*_q$ be the shortest path in $\overline{\mathcal{N}}(y)_{\bar{f}}$ from $q'=(i',j')$ to $q=(1,0)$ via $(1,j')$ that is guaranteed not to contain any reverse edge, we thus have
\begin{equation}
    |P^*_q|=(|i'-1|+|j'|)\tau.
    \label{eq:|P^*_q|}
\end{equation}
In the following, we show that $P^*_q$ is one of shortest paths in $\overline{\mathcal{N}}(y)_{\bar{f}}$ from $q'$ to $q$, i.e., the length of any other path is at least $(|i'-1|+|j'|)\tau$.

Let $P_q$ be any path from $q'$ to $q$ in $\overline{\mathcal{N}}(y)_{\bar{f}}$.
Let us consider vertical and horizontal lines passing through node $q$.
We then call the half plane to the left (resp. right) of the vertical line {\em L-region} (resp. {\em R-region}),
and similarly the half plane to above (resp. below) the horizontal line {\em U-region} (resp. {\em D-region}).
Here notice that $P_q$ consists of four types of edges: leftward, rightward, upward and downward edges.
Let $r_1$ (resp. $l_1$) be the number of rightward (resp. leftward) edges of $P_q$ lying in R-region.
Let $r_2$ (resp. $l_2$) be the number of rightward (resp. leftward) edges of $P_q$ lying in L-region.
Let $u_1$ (resp. $d_1$) be the number of upward (resp. downward) edges of $P_q$ lying in U-region.
Let $u_2$ (resp. $d_2$) be the number of upward (resp. downward) edges of $P_q$ lying in D-region.
In the residual network $\overline{\mathcal{N}}(y)_{\bar{f}}$, there are leftward reverse edges from $p=(0,0)$ to $(i,0)$ in L-region
and upward reverse edges from $(i,0)$ to $p'=(i,j)$ in U-region.
Therefore, the number of reverse edges in $P_q$ is at most $u_1 + l_2$, we thus have
\begin{equation}
    |P_q| \geq (l_1-l_2+r_1+r_2-u_1+u_2+d_1+d_2)\tau.
    \label{eq:P_q-1}
\end{equation}
First, we show that
\begin{equation}
l_1 - l_2 + r_1 + r_2 = 2 r_1 + |i'-1|.
\label{eq:right_left}
\end{equation}
When $i' \leq 1$, we have
\begin{equation}
l_1 = r_1, \quad l_2 = r_2-|i'-1|,
\label{eq:i_1}
\end{equation}
and when $i' > 1$, we have
\begin{equation}
\begin{split}
l_1 = r_1+|i'-1|, \quad l_2 = r_2.
\label{eq:i_2}
\end{split}
\end{equation}
In both cases of~\eqref{eq:i_1} and \eqref{eq:i_2}, equation \eqref{eq:right_left} holds.
Similarly, we show that 
\begin{equation}
- u_1 + u_2 + d_1 + d_2 = 2 d_2 + |j'|.
\label{eq:up_down}
\end{equation}
When $j' \leq 0$, we have
\begin{equation}
u_1 = d_1, \quad u_2 = d_2+|j'|,
\label{eq:j_1}
\end{equation}
and when $j' > 0$, we have
\begin{equation}
u_1 = d_1-|j'|, \quad u_2 = d_2.
\label{eq:j_2}
\end{equation}
In both cases of~\eqref{eq:j_1} and \eqref{eq:j_2}, equation \eqref{eq:up_down} holds.
Therefore, by equations \eqref{eq:P_q-1}, \eqref{eq:right_left} and \eqref{eq:up_down}, we have
\begin{equation}
\begin{split}
    |P_q| &\geq (2r_1+2d_2+|i'-1|+|j'|)\tau\\
          &\geq (|i'-1|+|j'|)\tau.
    \label{eq:|P_q|}
\end{split}
\end{equation}
The last inequality is led by the fact that $r_1$ and $d_2$ are non-negative.
Therefore, by equations \eqref{eq:|P_q|} and \eqref{eq:|P^*_q|}, we can conclude that $P^*_q$ is the shortest path in $\overline{\mathcal{N}}(y)_{\bar{f}}$ from $q'$ to $q$.
\qed
\end{proof}


%

As mentioned above, by Lemma~\ref{path}, we have $|P_1(X,y)| = d_1(X, y)$ and $|P_2(X,y)|$ $= d_2(X,y)$.
Substituting these into~\eqref{eq:o_theta_0}, $o_\theta(X,y)$ is represented by
\begin{equation}
\begin{split}
    \hspace{-10pt}
    o_\theta(X,y) = \max \left\{ 
    0,
    c \left(\theta - d_1(X,y) \right),
    c \left(\theta - d_1(X,y) \right) + c \left( \theta - d_2(X,y) \right)
    \right\}.
    \label{eq:max_o_theta}
\end{split}
\end{equation}
By the definition of $d_1(X,y)$, we obtain
\begin{equation}
\begin{split}
    c \left(\theta - d_1(X,y) \right)
    &= c \left(\theta - \min \{ d(X,p)+y, d(X,q)+\tau-y\} \right)\\
    &= \max \left\{ c(\theta - d(X,p) - \! y), c(\theta - d(X,q) - \tau + y ) \right\}.
    \label{eq:c_theta}
\end{split}
\end{equation}
Since 
\[
d_1(X,y)+d_2(X,y)=d(X,p)+d(X,q)+\tau
\]
holds by \eqref{eq:d_1,d_2}, we have
\begin{equation}
\begin{split}
    c \left(\theta - d_1(X,y) \right) + c \left( \theta - d_2(X,y) \right)
    = c \left(2\theta - d(X,p) - d(X,q) - \tau \right).
    \label{eq:2c_theta}
\end{split}
\end{equation}
Substituting \eqref{eq:c_theta} and \eqref{eq:2c_theta} into \eqref{eq:max_o_theta}, we obtain
\begin{equation}
\begin{split}
    o_\theta(X,y) =  \max \bigl\{
    & 0, c(\theta - d(X,p)-y),
    c(\theta - d(X,q)-\tau+y), \\
    & c(2\theta-d(X,p)-d(X,q)-\tau)
    \bigr\}.
    \label{eq:o_theta}
\end{split}
\end{equation}
Here, we describe the properties of $\Theta(X,y)$.
According to its definition \eqref{eq:Theta_(X,y)}, $\Theta(X,y)$ is the value of $\theta$ that satisfies $o_{\theta}(X,y)=w(X)$.
Regarding $\Theta(X,y)$ as a function in $y$, we have the following theorem
(see Fig.~\ref{fig:Theta_(X,y)} for the shape of $\Theta(X,y)$ for $0 < y < \tau$).

\begin{figure*}[tb]
    \centering
    \includegraphics[width=\hsize, keepaspectratio]{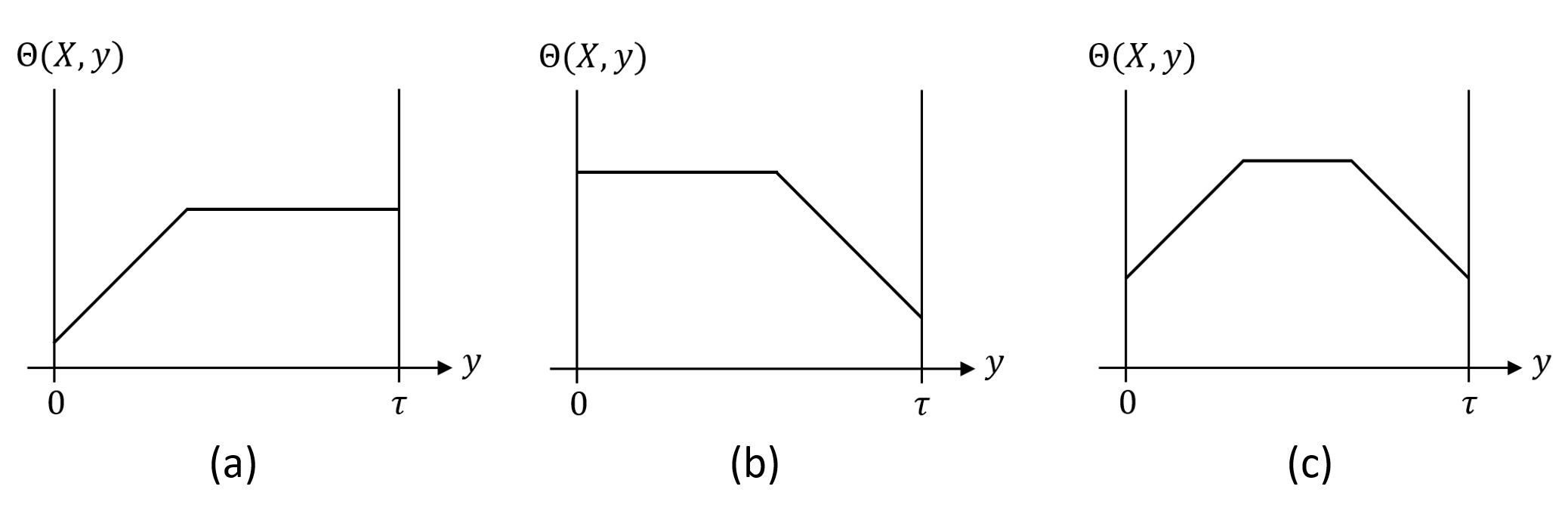}
    \caption{The shape of $\Theta(X,y)$ for $0 < y < \tau$ in the cases where (a) $d(X,p) < d(X,q)$, (b) $d(X,p) > d(X,q)$, and (c) $d(X,p) = d(X,q)$, respectively.}
    \label{fig:Theta_(X,y)}
\end{figure*}

\begin{theorem}\label{thm:Theta(X,y)}
For any source set $X\subseteq{S^+}$, $\Theta(X,y)$ is a piecewise linear function in $y$ with $0<y<\tau$ of at most three line segments, represented by
\begin{equation*}
    \begin{split}
        \Theta(X,y) = \min
        \biggl\{
        & y + \frac{w(X)}{c} + d(X,p),
         -y + \tau + \frac{w(X)}{c} + d(X,q), \\
        & \frac{w(X)}{2c} + \frac{d(X,p) + d(X,q) + \tau}{2}
        \biggr\}.
    \end{split}
\end{equation*}
\end{theorem}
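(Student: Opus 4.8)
The plan is to invert the relation $o_\theta(X,y) = w(X)$ using the explicit form of $o_\theta$ established in~\eqref{eq:o_theta}. Recall from~\eqref{eq:Theta_(X,y)} that $\Theta(X,y) = \min\{\theta \mid o_\theta(X,y) - w(X) \geq 0\}$, and that $o_\theta(X,y)$ is the maximum of the constant $0$ together with three functions that are affine and strictly increasing in $\theta$ (with slopes $c$, $c$, and $2c$). Since every node of the grid is a source, $w(X) = \sum_{v \in X} w(v) \geq 0$; I will treat the generic case $w(X) > 0$, so that the constant branch $0$ of the maximum never attains the value $w(X)$ and may be discarded.

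The first key step is a monotonicity observation. Writing $o_\theta(X,y) = \max\{g_0, g_1(\theta), g_2(\theta), g_3(\theta)\}$, where $g_1,g_2,g_3$ are the three affine functions of $\theta$ appearing in~\eqref{eq:o_theta} and $g_0 \equiv 0$, one has $o_\theta(X,y) \geq w(X)$ if and only if $g_i(\theta) \geq w(X)$ for some $i$. Because each $g_i$ ($i=1,2,3$) is strictly increasing in $\theta$, the solution set $\{\theta \mid g_i(\theta) \geq w(X)\}$ is a half-line $[\theta_i,\infty)$, where $\theta_i$ solves $g_i(\theta_i) = w(X)$. Hence the feasibility set $\{\theta \mid o_\theta(X,y) \geq w(X)\}$ equals the union $\bigcup_{i=1}^{3}[\theta_i,\infty) = [\min_i \theta_i, \infty)$, so that $\Theta(X,y) = \min\{\theta_1,\theta_2,\theta_3\}$.

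Next I would solve the three linear equations $g_i(\theta_i) = w(X)$ explicitly. From the affine terms of~\eqref{eq:o_theta}, namely $c(\theta - d(X,p) - y) = w(X)$, $c(\theta - d(X,q) - \tau + y) = w(X)$, and $c(2\theta - d(X,p) - d(X,q) - \tau) = w(X)$, I obtain
\begin{gather*}
\theta_1 = y + \frac{w(X)}{c} + d(X,p), \qquad
\theta_2 = \tau - y + \frac{w(X)}{c} + d(X,q), \\
\theta_3 = \frac{w(X)}{2c} + \frac{d(X,p) + d(X,q) + \tau}{2},
\end{gather*}
which are exactly the three expressions inside the claimed minimum.

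Finally, regarding $\Theta(X,y)$ as a function of $y$ on $(0,\tau)$, I would note that $\theta_1$ is affine in $y$ with slope $+1$, $\theta_2$ is affine with slope $-1$, and $\theta_3$ is constant in $y$; the pointwise minimum of three affine functions is concave and piecewise linear with at most three linear pieces, giving the stated shape (cf.\ Fig.~\ref{fig:Theta_(X,y)}). I do not expect a genuine obstacle here: the hard analytic content—that $d_1(X,y)$ and $d_2(X,y)$ are realized by edge-disjoint paths, so that the clean form~\eqref{eq:o_theta} is valid—is already supplied by Lemma~\ref{path}. The only points requiring care are justifying that the constant branch $g_0$ may be dropped and that the ``minimum of inverses'' identity holds, both of which rest solely on the strict monotonicity of $o_\theta(X,y)$ in $\theta$.
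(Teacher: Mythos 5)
Your proof is correct and takes essentially the same route as the paper's own proof: both identify the three affine-in-$\theta$ branches $g_1,g_2,g_3$ of $o_\theta(X,y)$ from \eqref{eq:o_theta}, solve $g_i(\theta)=w(X)$ for each, take the minimum of the three resulting expressions, and observe that these are affine in $y$ (slopes $+1$, $-1$, $0$), so their minimum is piecewise linear with at most three segments. If anything, your write-up is slightly more careful than the paper's, since you explicitly justify discarding the constant branch $g_0\equiv 0$ (via $w(X)>0$) and the ``minimum of inverses'' identity via the union of half-lines, both of which the paper leaves implicit.
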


\begin{proof}
Let us denote the linear functions in $\theta$ that compose $o_\theta(X,y)$ as
\begin{equation*}
\begin{split}
    g_1(\theta) &= c\left(\theta - d(X,p)-y\right),\\
    g_2(\theta) &= c\left(\theta - d(X,q)-\tau+y\right),\\
    g_3(\theta) &= c\left(2\theta-d(X,p)-d(X,q)-\tau\right).
\end{split}
\end{equation*}
By equations \eqref{eq:Theta_(X,y)} and \eqref{eq:o_theta}, $\Theta(X,y)$ is the minimum value of $\theta$ among values that make each of $g_1(\theta)$, $g_2(\theta)$, and $g_3(\theta)$ equal to $w(X)$.
By solving $g_1(\theta) = w(X)$ for $\theta$, we obtain $\theta = y + \frac{w(X)}{c} + d(X,p)$.
Similarly, $g_2(\theta) = w(X)$ when $\theta = -y + \tau + \frac{w(X)}{c} + d(X,q)$, and $g_3(\theta) = w(X)$ when $\theta= \frac{w(X)}{2c} + \frac{d(X,p) + d(X,q)+\tau}{2}$.
Therefore, the minimum value among these three values is $\Theta(X,y)$. 
Additionally, with respect to $y$, each of the above $\theta$ is linear function.
Hence, $\Theta(X,y)$ is a piecewise linear function in $y$ consisting of at most three segments.
\qed
\end{proof}

\subsection{Dominant source sets}
\label{sec:domination_property}
For two subsets of $S^+$, $X$ and $X'$, $X$ is said to {\em dominate} $X'$ if $\Theta(X,y) \geq \Theta(X',y)$ holds for all $y$.
Let us consider the sufficient condition for a source set $X$ to dominate another source set $X'$.
If $d(X,p) = d(X',p)$ and $d(X,q) = d(X',q)$, then functions $o_\theta(X,y)$ and $o_\theta(X',y)$ coincide according to \eqref{eq:o_theta}.
In this case, if $w(X) \geq w(X')$, then $\Theta(X,y) \geq \Theta(X',y)$ by definition \eqref{eq:Theta_(X,y)}.
Among source sets $X$ such that $d(X,p)$ and $d(X,q)$ remain the same respectively, one with the maximum $w(X)$ is called a {\em dominant source set}.
For two integers $i$ and $j$, let $X_{i,j}$ be the dominant source set such that $d(X_{i,j},p) = i \tau$ and $d(X_{i,j},q) = j \tau$.
Then, $X_{i,j}$ is formally defined by the following:
\begin{equation*}
    X_{i,j} \coloneqq 
    \{ x\in V(y) \mid d(x,p) \geq i\tau, \ d(x,q) \geq j\tau \}.
\end{equation*}
Note that $X_{i,j}=\emptyset$ may hold when $i$ or/and $j$ are large enough.
Let $\mathcal{X}$ be the family of dominant source sets.
Note that both $d(X,p)$ and $d(X,q)$ are at most $2\tau\sqrt{n}$, we thus have
\begin{equation}
    \mathcal{X} \coloneqq \{X_{i,j} \mid 0 \leq i \leq 2\sqrt{n}, 0 \leq j \leq 2\sqrt{n}\} \setminus \{\emptyset \},
    \label{eq:family}
\end{equation}
and also
\begin{equation}
    \Theta^*(y) = \max\{\Theta(X_{i,j},y) \mid X_{i,j} \in \mathcal{X}\}.
    \label{eq:new_Theta^*(y)}
\end{equation}
Although the number of source sets in $\mathcal{X}$ seems to be $O(n)$ by definition~\eqref{eq:family},
we can provide a more precise upper bound on the number of $X_{i,j}$.
The following lemma implies that when considering $X_{i,j}$ for a fixed $i$, only the constant number of $j$ are enough to be considered.

\begin{lemma}\label{combination}
For any source set $X\subseteq{S^+}$, it holds in $\overline{\mathcal{N}}(y)$
\begin{equation*}
    d(X,p)-d(X,q)=\tau i,
\end{equation*}
where $i$ is one of integers $-3, -2, -1, 0, 1, 2, 3$.
\end{lemma}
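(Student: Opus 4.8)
The plan is to combine two observations: that every shortest-path distance in the grid part of $\overline{\mathcal{N}}(y)$ is an integer multiple of $\tau$, and that the map $v \mapsto d(X,v)$ is $1$-Lipschitz with respect to the shortest-path metric on the grid nodes. The first observation immediately yields the ``$\tau i$'' form: since all grid edges have uniform transit time $\tau$, each $d(x,p)$ and each $d(x,q)$ is a nonnegative integer multiple of $\tau$, and hence so are $d(X,p)=\min_{x\in X}d(x,p)$ and $d(X,q)=\min_{x\in X}d(x,q)$ by \eqref{eq:d_uv}. Consequently $d(X,p)-d(X,q)=\tau i$ for some integer $i$, and it only remains to bound $|i|$ by $3$.

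The key geometric input is the distance between $p$ and $q$ in $\overline{\mathcal{N}}(y)$. Because the construction of $\mathcal{N}(y)$ deletes the edges $(p,q)$ and $(q,p)$, the adjacent nodes $p=(0,0)$ and $q=(1,0)$ are no longer joined directly; however the three-edge detour through nodes $(0,0)\to(0,1)\to(1,1)\to(1,0)$ (or its downward mirror, whichever stays inside the grid) has length $3\tau$, so $d(p,q)\le 3\tau$. I expect this step to be the main obstacle, in the sense that it is the only place where the concrete grid structure is used: one must check that such a detour of length exactly $3\tau$ exists, which requires confirming that a horizontal edge always has an adjacent row above or below it (guaranteed for an $N\times N$ grid with $N\ge 2$). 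Only the upper bound $d(p,q)\le 3\tau$ is actually needed; equality holds as well, since $p$ and $q$ have no common neighbour and thus admit no connecting path of length $\tau$ or $2\tau$ after the deletion.

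Finally I would apply the triangle inequality. Restricted to the grid nodes $V(y)\setminus\{s^-\}$ the distance $d$ is a genuine (symmetric) metric, because by the remark preceding \eqref{eq:d_uv} such shortest paths never traverse the sink edges $(p,s^-)$ or $(q,s^-)$, and the remaining grid edges are bidirected with equal cost. Hence for each $x\in X$ we have $d(x,p)\le d(x,q)+d(q,p)$; taking the minimum over $x\in X$ gives $d(X,p)\le d(X,q)+d(p,q)\le d(X,q)+3\tau$, and the symmetric argument gives $d(X,q)\le d(X,p)+3\tau$. Therefore $|d(X,p)-d(X,q)|\le 3\tau$, which together with $d(X,p)-d(X,q)=\tau i$ and $i\in\mathbb{Z}$ forces $i\in\{-3,-2,-1,0,1,2,3\}$, as claimed.
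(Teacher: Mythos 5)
Your proof is correct, but it takes a genuinely different and more economical route than the paper. The paper's proof fixes a node $x\in X$ attaining $d(X,\{p,q\})=\min\{d(X,p),d(X,q)\}$, partitions the grid into four regions ($W_1,W_2$ off the row of $p,q$, and $Z_1,Z_2$ on it, left and right), and in each case exploits the explicit relation $d(x,q)=d(x,p)+\tau$ or $d(x,q)=d(x,p)+3\tau$ (or the mirror) to derive two-sided bounds such as $-\tau\le d(X,p)-d(X,q)\le 0$ and $-3\tau\le d(X,p)-d(X,q)\le 0$, which it then unites into $-3\tau\le d(X,p)-d(X,q)\le 3\tau$. You instead isolate the single geometric fact that deleting $(p,q)$ leaves a detour of length $3\tau$, i.e.\ $d(p,q)\le 3\tau$, and then get $|d(X,p)-d(X,q)|\le d(p,q)$ from the $1$-Lipschitz property of $v\mapsto d(X,v)$ (the triangle inequality survives the minimization over $x\in X$ because $d(q,p)$ is constant in $x$); combined with integrality of all distances in units of $\tau$, this forces $i\in\{-3,\dots,3\}$. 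Your justification of the two ingredients is sound: paths between sources cannot traverse the sink edges (they are directed into $s^-$), so $d$ restricted to grid nodes is a genuine symmetric metric, and the three-edge detour exists for $N\ge 2$. What your argument buys is brevity and generality — it applies verbatim to $M\times N$ grids (the extension the paper mentions in its conclusion) and, more broadly, to any uniform network where deleting the host edge leaves a $p$--$q$ detour of length $k\tau$, yielding $|i|\le k$. What the paper's case analysis buys is finer, sign-resolved information (e.g.\ $d(X,p)\le d(X,q)$ whenever the closest source lies on the $p$ side), but that extra precision is never used downstream, so your shorter proof suffices for all later applications of the lemma (Theorem~\ref{thm:family} and Lemma~\ref{lem:w(X_{i,j})}).
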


\begin{proof}

\begin{figure}[t]
    \centering
    \includegraphics[width=0.45\hsize, keepaspectratio]{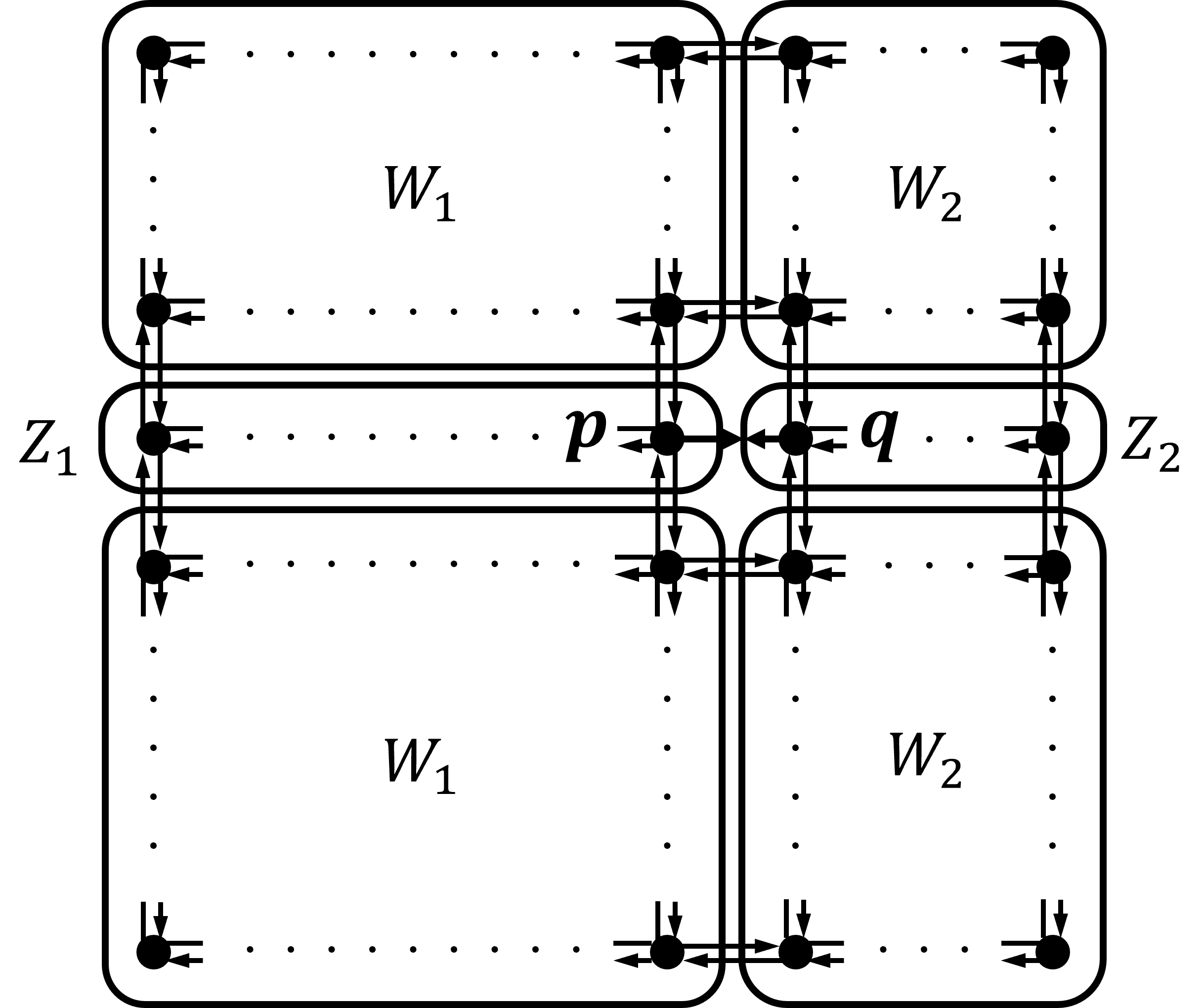}
    \caption{Division of the regions based on the inclusion of node $x$}
    \label{fig:area}
\end{figure}

Let us define subsets of node set in $\overline{\mathcal{N}}(y)$ as follows (see Fig.~\ref{fig:area}):
\begin{equation*}
\begin{split}
    W_1 &\coloneqq \{(i,j) \in V(y) \mid i \leq 0, j \neq 0\},\\
    W_2 &\coloneqq \{(i,j) \in V(y) \mid i \geq 1, j \neq 0\},\\
    Z_1 &\coloneqq \{(i,j) \in V(y) \mid i \leq 0, j = 0\},\\
    Z_2 &\coloneqq \{(i,j) \in V(y) \mid i \geq 1, j = 0\}.\\
\end{split}
\end{equation*}
Furthermore, let $d(X,\{p,q\})$ denote the minimum value among $d(X,p)$ and $d(X,q)$, that is,
\begin{equation}
d(X,\{p,q\}) = \min \{d(X,p),d(X,q)\}.
\label{eq:d(X,{u,v})}
\end{equation}
Here, we know that there exists a node $x \in X$ such that
\begin{equation}
d(X,\{p,q\}) = \min \{d(x,p),d(x,q)\}.
\label{eq:d_x}
\end{equation}
Let us consider the following four cases regarding the position of such $x$: (Case~1) $x\in{W_1}$, (Case~2) $x\in{Z_1}$, (Case~3) $x\in{W_2}$, and (Case~4) $x\in{Z_2}$.
We give the proof only for Cases~1 and 2 because Cases~3 and 4 shown in a similar way.

(Case~1) When $x\in{W_1}$, we have
\begin{equation}
    d(x,q) = d(x,p) + \tau > d(x,p).
    \label{eq:x_u=x_v+tau}
\end{equation}
Then by equation \eqref{eq:d(X,{u,v})}, \eqref{eq:d_x} and \eqref{eq:x_u=x_v+tau}, we have
\begin{equation}
    d(x,p) = d(X,\{p,q\}) \leq d(X,p).
    \label{eq:d(X,p)=d(x,p)}
\end{equation}
Since $d(X,p) \leq d(x,p)$ holds, we have
\begin{equation}
    d(x,p) = d(X,\{p,q\}) =d(X,p).
    \label{eq:d(X,p)_1}
\end{equation}
Equation \eqref{eq:d(X,p)_1} implies that
\begin{equation}
    d(X,p) \leq d(X,q)
    \label{eq:d(X,p)}
\end{equation}
holds.
Furthermore, since $d(X,q) \leq d(x,q)$, we have
\begin{equation}
    d(X,q) \leq d(x,p) + \tau = d(X,p) + \tau.
    \label{eq:W_1}
\end{equation}
Considering equations \eqref{eq:d(X,p)} and \eqref{eq:W_1}, we obtain
\begin{equation}
\begin{split}
    -\tau \leq d(X,p)-d(X,q) \leq 0.
    \label{eq:A_W_1}
\end{split}
\end{equation}

(Case~2) When $x\in{Z_1}$, because we need to go through $W_1$ and $W_2$ from $x$ to $q$, we have
\begin{equation}
    d(x,q)=d(x,p) + 3\tau > d(x,p).
    \label{eq:x_u=x_v+3tau}
\end{equation}
In a similar way to Case~1, we can conclude that equations \eqref{eq:d(X,p)=d(x,p)} and \eqref{eq:d(X,p)} hold.
Since $d(X,q) \leq d(x,q)$, by equations \eqref{eq:d(X,p)=d(x,p)} and \eqref{eq:x_u=x_v+3tau}, we have
\begin{equation}
    d(X,q) \leq d(X,p) + 3\tau.
    \label{eq:Z_1}
\end{equation}
Combining equations \eqref{eq:d(X,p)} and \eqref{eq:Z_1}, we obtain
\begin{equation}
\begin{split}
    -3\tau \leq d(X,p)-d(X,q) \leq 0.
    \label{eq:A_Z_1}
\end{split}
\end{equation}

(Case~3) When $x\in{W_2}$, similar to Case~1, we obtain
\begin{equation}
\begin{split}
    0 \leq d(X,p)-d(X,q) \leq \tau.
    \label{eq:A_W_2}
\end{split}
\end{equation}

(Case~4) When $x\in{Z_2}$, similar to Case~2, we obtain
\begin{equation}
\begin{split}
    0 \leq d(X,p)-d(X,q) \leq 3\tau.
    \label{eq:A_Z_2}
\end{split}
\end{equation}

Summarizing the above cases, for all node sets $X \subseteq S^+$, we have
\begin{equation}
    -3\tau \leq d(X,p)-d(X,q) \leq 3\tau.
    \label{eq:-3to3}
\end{equation}
Recall that $d(X,p)$ and $d(X,q)$ are integer multiples of $\tau$, which concludes the proof.
\qed
\end{proof}

Recall that $d(X,p)$ can take at most $O(\sqrt{n})$ values, and for a fixed value of $d(X,p)$, the number of values which $d(X,q)$ can take is at most seven by Lemma~\ref{combination}. Therefore, we obtain the following theorem. 

\begin{theorem}\label{thm:family}
The number of dominant source sets in $\mathcal{X}$ is $O(\sqrt{n})$.
\end{theorem}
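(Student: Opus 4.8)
The plan is to obtain the bound by a direct counting argument that combines two facts: an $O(\sqrt{n})$ bound on the number of distinct values of $d(X,p)$, and the constant bound from Lemma~\ref{combination} on how $d(X,q)$ is tied to $d(X,p)$. Since each dominant source set $X_{i,j} \in \mathcal{X}$ is uniquely indexed by the pair $(i,j)$ with $d(X_{i,j},p) = i\tau$ and $d(X_{i,j},q) = j\tau$, it suffices to bound the number of admissible index pairs and then read off $|\mathcal{X}|$.

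First I would record that both $d(X,p)$ and $d(X,q)$ are integer multiples of $\tau$, since the grid has uniform transit time $\tau$ and every shortest path in $\overline{\mathcal{N}}(y)$ traverses a whole number of unit edges. As already noted in the derivation of \eqref{eq:family}, these distances are bounded above by $2\tau\sqrt{n}$, so the index $i = d(X,p)/\tau$ ranges over $0,1,\ldots,O(\sqrt{n})$, yielding at most $O(\sqrt{n})$ distinct values of $d(X,p)$.

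Next I would fix a value $d(X,p) = i\tau$ and count the admissible $d(X,q) = j\tau$. By Lemma~\ref{combination}, $d(X,p) - d(X,q) = k\tau$ for some $k \in \{-3,-2,-1,0,1,2,3\}$, hence $j \in \{i-3, i-2, \ldots, i+3\}$ and $d(X,q)$ takes at most seven values for this fixed $i$. Multiplying the two bounds, the number of admissible pairs $(i,j)$ indexing a nonempty dominant source set is at most $7 \cdot O(\sqrt{n}) = O(\sqrt{n})$, which gives $|\mathcal{X}| = O(\sqrt{n})$.

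I do not anticipate a genuine obstacle, since Lemma~\ref{combination} already supplies the decisive structural constraint and the remaining work is purely combinatorial. The only point that needs care is verifying that distinct members of $\mathcal{X}$ correspond to distinct pairs $(i,j)$, so that the pair count truly upper-bounds $|\mathcal{X}|$; this holds because $X_{i,j}$ is defined directly from the pair of defining distances $(d(X,p),d(X,q))$, so two dominant source sets sharing the same pair are identical.
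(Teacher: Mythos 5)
Your proof is correct and follows essentially the same route as the paper: the paper likewise combines the $O(\sqrt{n})$ bound on the possible values of $d(X,p)$ (distances being integer multiples of $\tau$ bounded by $2\tau\sqrt{n}$) with Lemma~\ref{combination}'s guarantee that at most seven values of $d(X,q)$ are admissible for each fixed $d(X,p)$. Your added remark on why the pair count upper-bounds $|\mathcal{X}|$ is a harmless refinement the paper leaves implicit, since $\mathcal{X}$ is by definition a set of sets indexed by the pairs $(i,j)$.
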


\subsection{Algorithms}
\label{sec:algorithm}
Our algorithm consists of three steps.
First, we compute $w(X_{i,j})$ for all $X_{i,j} \in \mathcal{X}$.
Next, we calculate the segments that compose $\Theta(X_{i,j},y)$ for all $X_{i,j} \in \mathcal{X}$.
Finally, we find the value of $y$ that minimizes the upper envelope of the function family $\{\Theta(X_{i,j},y) \mid X_{i,j} \in \mathcal{X}\}$.

Let us first describe a method to compute $w(X_{i,j})$ for all $X_{i,j} \in \mathcal{X}$.
Since the number of nodes is $n$, we can compute $w(X_{i,j})$ in $O(n)$ time for any $i, j$.
Moreover, we have $|\mathcal{X}|=O(\sqrt{n})$ by Theorem~\ref{thm:family}, so all $w(X_{i,j})$ can be computed in $O(n\sqrt{n})$ time.
However, $w(X_{i,j})$ can be calculated more efficiently by using dynamic programming.
See the following lemma.
\begin{figure*}[tb]
    \centering
    \includegraphics[width=0.5\hsize, keepaspectratio]{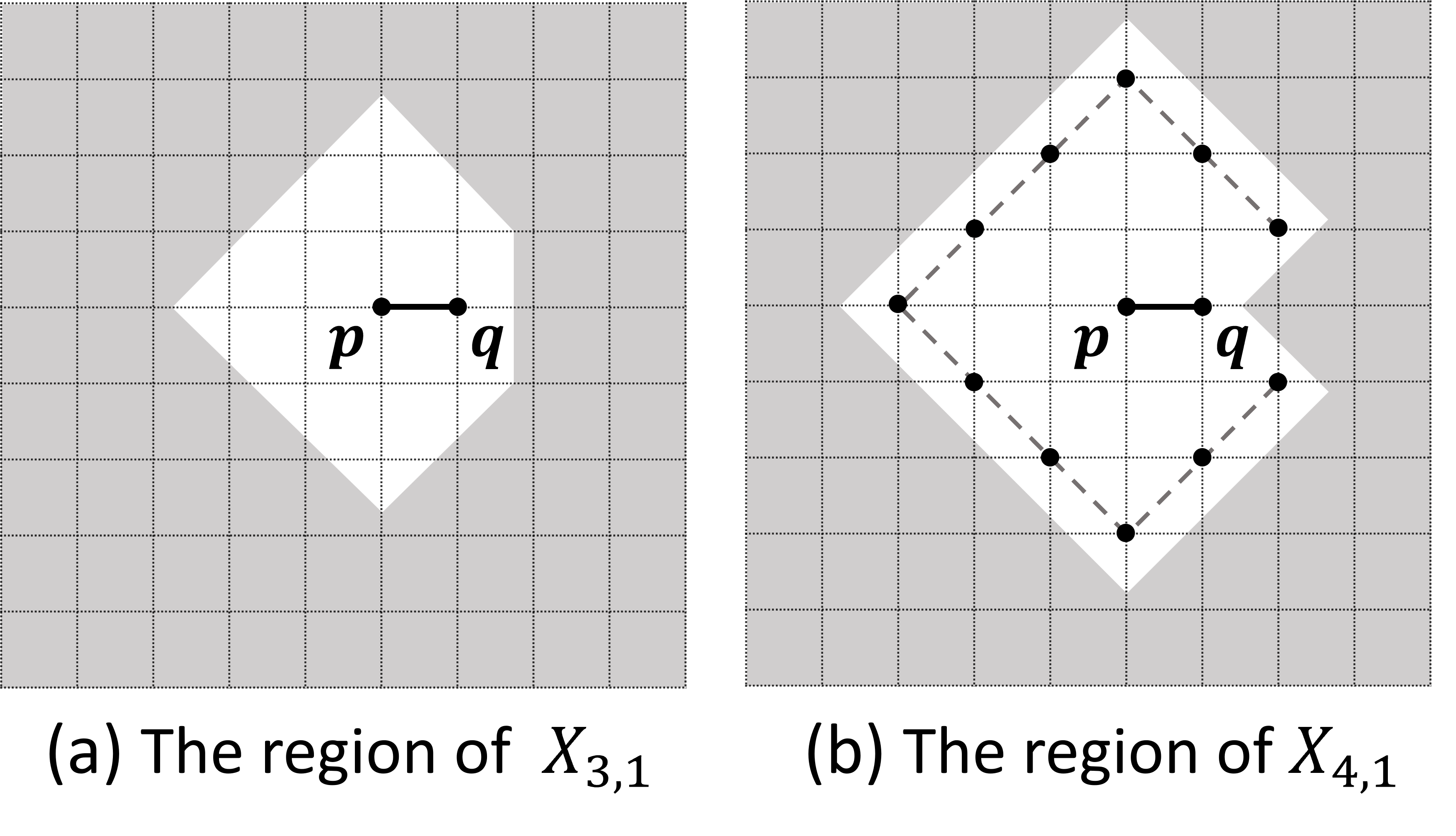}
    \caption{The regions of $X_{3,1}$ and $X_{4,1}$ consist of nodes within the gray area respectively.
    The nodes located on the dashed line in (b) represent the difference between $X_{3,1}$ and $X_{4,1}$.}
    \label{fig:w(X_i,j)}
\end{figure*}

\begin{lemma}\label{lem:w(X_{i,j})}
Values $w(X_{i,j})$ for all $X_{i,j} \in \mathcal{X}$ can be computed in $O(n)$ time.
\end{lemma}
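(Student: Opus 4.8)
The plan is to exploit the nested structure of the dominant source sets $X_{i,j}$ so that we never recompute a node's contribution from scratch. The key observation is that the sets $X_{i,j}=\{x\in V(y)\mid d(x,p)\ge i\tau,\ d(x,q)\ge j\tau\}$ are monotone: increasing $i$ (with $j$ fixed) only removes nodes, and by Lemma~\ref{combination} the relevant pairs $(i,j)$ lie in a band where $|i-j|\le 3$. So as we sweep $i$ from $0$ upward and, for each $i$, let $j$ range over its at most seven admissible values, consecutive sets differ only by a \emph{thin strip} of nodes (illustrated in Fig.~\ref{fig:w(X_i,j)}, where $X_{3,1}$ and $X_{4,1}$ differ exactly along one dashed diagonal). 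The plan is to precompute, in a single $O(n)$ pass over the grid, for every node $x$ the pair $(d(x,p),d(x,q))$ — these are just the (uniform-metric) shortest-path distances, computable directly from the coordinates of $x$ relative to $p=(0,0)$ and $q=(1,0)$ — and to accumulate $w(x)$ into a two-dimensional bucket indexed by $(d(x,p)/\tau,\,d(x,q)/\tau)$.

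First I would set $a(i',j')\coloneqq\sum\{w(x)\mid d(x,p)=i'\tau,\ d(x,q)=j'\tau\}$, the total supply of nodes at distance-profile exactly $(i',j')$. Building the whole table $a(\cdot,\cdot)$ costs $O(n)$ since each node contributes to exactly one entry. Then $w(X_{i,j})=\sum_{i'\ge i,\,j'\ge j} a(i',j')$ is a \emph{two-dimensional suffix sum} of $a$, and the standard identity
\begin{equation*}
    W(i,j)=a(i,j)+W(i+1,j)+W(i,j+1)-W(i+1,j+1),
\end{equation*}
where $W(i,j)\coloneqq w(X_{i,j})$, lets me fill the table by dynamic programming in time proportional to the number of occupied cells. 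The crucial point is that, by Lemma~\ref{combination}, the entries $a(i',j')$ are nonzero only when $|i'-j'|\le 3$, so there are only $O(\sqrt n)$ occupied diagonals' worth of cells; the suffix-sum recurrence therefore touches only $O(\sqrt n)$ cells, and the table $\{W(i,j)\mid X_{i,j}\in\mathcal{X}\}$ is produced in $O(\sqrt n)$ additional time once $a$ is built. Thus the total is $O(n)+O(\sqrt n)=O(n)$.

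The one subtlety I would handle carefully is the correspondence between the combinatorial distances used to index buckets and the definition of $X_{i,j}$ as a \emph{threshold} set rather than a level set. Because $X_{i,j}$ is defined by the inequalities $d(x,p)\ge i\tau$ and $d(x,q)\ge j\tau$, summing $a$ over $i'\ge i,\,j'\ge j$ is exactly right, but I must make sure the band constraint $|i'-j'|\le 3$ from Lemma~\ref{combination} is applied to the \emph{individual nodes} $x$ (equivalently, to the singleton profiles $(i',j')$) and not only to the aggregated set profile — these coincide here because each node is itself a source set, so its own profile obeys the lemma.

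\textbf{The main obstacle} I expect is not the arithmetic but confirming that the suffix-sum recurrence stays within the sparse band: one has to verify that every cell $(i,j)$ referenced on the right-hand side of the recurrence (namely $(i+1,j)$, $(i,j+1)$, $(i+1,j+1)$) is either occupied or provably zero, so that the recurrence can be evaluated by scanning only the $O(\sqrt n)$ in-band cells in order of decreasing $i$ (and, within each $i$, decreasing $j$) without ever dereferencing an out-of-band entry as a nonzero contributor. Once that bookkeeping is pinned down, the $O(n)$ bound follows immediately, since the dominant cost is the single linear pass that populates $a$.
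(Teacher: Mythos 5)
Your route is genuinely different from the paper's, and it is worth noting what each buys. The paper never forms a bucket table: it computes $w(X_{0,0})$ directly in $O(n)$ time and then obtains each remaining $w(X_{i,j})$ from an already-computed neighbor ($w(X_{i-1,j})$ or $w(X_{i,j-1})$) by subtracting the supplies of the $O(\sqrt{n})$ nodes in the set difference (the strip in Fig.~\ref{fig:w(X_i,j)}), visiting the $O(\sqrt{n})$ in-band pairs $(i,j)$ in a fixed order; the total is $O(n)+O(\sqrt{n})\times O(\sqrt{n})=O(n)$. Because each of the paper's updates removes a single one-dimensional strip, it needs no inclusion--exclusion and therefore never meets the band-boundary problem that you flag. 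Your version front-loads the work into an $O(n)$ bucketing pass over the profiles $(d(x,p)/\tau,\,d(x,q)/\tau)$ (correctly justified: each singleton $\{x\}$ is itself a source set, so Lemma~\ref{combination} gives the sparsity $|i'-j'|\le 3$), after which each set costs $O(1)$ --- a clean reduction to sparse two-dimensional suffix sums.

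However, the ``main obstacle'' you identify is a genuine gap, and the resolution you anticipate is the wrong one: you hope to verify that every out-of-band cell referenced by the recurrence is ``occupied or provably zero,'' but the suffix sums $W(i,j)$ of out-of-band cells are \emph{not} zero --- only the bucket values $a(i,j)$ vanish off the band. Concretely, take an in-band cell on the bottom edge of the band, $(i,j)$ with $j=i-3$; your recurrence references $W(i+1,i-3)$, which equals $\sum_{i'\ge i+1,\,j'\ge i-3}a(i',j')$ and hence contains the entire contribution of all rows $i'\ge i+1$, generally a large nonzero quantity. Treating it as zero silently drops all of those rows, so the recurrence cannot be evaluated over only the in-band cells in the form you propose. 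The repair is easy but must replace, not verify, your criterion: since $a$ vanishes off the band, an out-of-band reference can be \emph{redirected} to an in-band cell carrying the same value, namely $W(i+1,j)=W(i+1,i-2)$ whenever $j<(i+1)-3$, and $W(i,j')=W(i+1,j')$ whenever $j'>i+3$. Equivalently, avoid inclusion--exclusion altogether by storing per-row suffix sums $R(i,j)\coloneqq\sum_{j'\ge j}a(i,j')$ and using $W(i,j)=R(i,j)+W(i+1,\max\{j,\,i-2\})$, whose right-hand side stays in-band by construction. With either repair your argument is correct and still runs in $O(n)+O(\sqrt{n})=O(n)$ time.
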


\begin{proof}
Basically we calculate $w(X_{i,j})$ by subtracting the sum of supplies of $O(\sqrt{n})$ nodes from $w(X_{i-1,j})$ or $w(X_{i,j-1})$ as shown in Fig.~\ref{fig:w(X_i,j)}, which can be done in $O(\sqrt{n})$ time.
We then show the order of calculation of $w(X_{i,j})$ as follows.
First of all, we calculate $w(X_{0,0})$ in $O(n)$ time, and successively updating from $w(X_{0,0})$, we calculate $w(X_{1,0})$ and $w(X_{2,0})$.
For $i=0,1,2$, successively updating from $w(X_{i,0})$, we calculate $w(X_{i,1}), \ldots, w(X_{i,i+3})$.
For $i \geq 3$, we update $w(X_{i-1,i-3})$ to $w(X_{i,i-3})$, and successively updating from $w(X_{i,i-3})$, calculate $w(X_{i,i-2}), \ldots, w(X_{i,i+3})$.
Note that for each $i$, we consider only at most seven subsets $X_{i,j}$ according to Lemma~\ref{combination}.

By Theorem~\ref{thm:family}, the total number of subsets $X_{i,j}$ considered is $O(\sqrt{n})$, therefore, all values of $w(X_{i,j})$ can be computed in $O(n) + O(\sqrt{n}) \times O(\sqrt{n}) = O(n)$ time.
\qed
\end{proof}

In the second step, we calculate the segments that compose $\Theta(X_{i,j},y)$ for all $X_{i,j} \in \mathcal{X}$.
Recall that $d(X_{i,j},p) = i \tau$ and $d(X_{i,j},q) = j \tau$.
By Theorem~\ref{thm:Theta(X,y)}, $\Theta(X_{i,j},y)$ is piecewise linear function represented by
\begin{equation}
\begin{split}
    \Theta(X_{i,j},y) = \min
    \biggl\{
    & y + \frac{w(X_{i,j})}{c} + i \tau,
    -y + \tau + \frac{w(X_{i,j})}{c} + j \tau,\\
    & \frac{w(X_{i,j})}{2c} + \frac{i + j + 1}{2}\tau
    \biggr\}.
    \label{eq:theta_X_i,j}
\end{split}
\end{equation}
Using $w(X_{i,j})$ obtained at the first step, $\Theta(X_{i,j},y)$ can be calculated in $O(1)$ time.
According to equation~\eqref{eq:new_Theta^*(y)}, $\Theta^*(y)$ is the upper envelope of the function family $\{\Theta(X_{i,j},y) \mid X_{i,j} \in \mathcal{X}\}$.
By equation~\eqref{eq:theta_X_i,j}, $\Theta(X_{i,j},y)$ consists of at most three line segments, hence by Theorem~\ref{thm:family}, $\Theta^*(y)$ is determined by the upper envelope of $O(\sqrt{n})$ line segments, which can be computed in $O(\sqrt{n} \log n)$ time by Corollary~\ref{cor:upperenvelope}.
Thus, the value of $y$ that minimizes $\Theta^*(y)$ can also be found in $O(\sqrt{n}\log n)$ time in the third step. 


Summarizing the above discussions, the first step requires $O(n)$ time, the second step requires $O(\sqrt{n})$ time, and the third step requires $O(\sqrt{n} \log n)$ time.
Therefore, we obtain the following theorem.

\begin{theorem}\label{thm:main}
Given a dynamic flow grid network of $N \times N$ nodes (without any sink) with uniform edge capacity and transit time,  the 1-sink location problem on a particular edge can be solved in $O(n)$ time.
\end{theorem}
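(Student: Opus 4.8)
The plan is to convert the continuous minimization over $y$ into a discrete upper-envelope computation over the family $\mathcal{X}$ of dominant source sets. By \eqref{eq:new_Theta^*(y)} the objective $\Theta^*(y)$ is the pointwise maximum of the functions $\Theta(X_{i,j},y)$ taken over $X_{i,j}\in\mathcal{X}$, and Theorem~\ref{thm:family} guarantees that $|\mathcal{X}|=O(\sqrt{n})$. Since Theorem~\ref{thm:Theta(X,y)} (in the form \eqref{eq:theta_X_i,j}) shows each $\Theta(X_{i,j},y)$ to be piecewise linear with at most three segments whose coefficients depend only on $i$, $j$, $\tau$, $c$ and the single scalar $w(X_{i,j})$, the whole task reduces to assembling $O(\sqrt{n})$ line segments and minimizing their upper envelope.

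I would implement this in the three steps already laid out and bound each separately. In the first step I compute $w(X_{i,j})$ for every $X_{i,j}\in\mathcal{X}$; invoking Lemma~\ref{lem:w(X_{i,j})} does this in $O(n)$ time, which I expect to be the bottleneck. In the second step, for each of the $O(\sqrt{n})$ dominant sets I substitute $d(X_{i,j},p)=i\tau$, $d(X_{i,j},q)=j\tau$ and the precomputed $w(X_{i,j})$ into \eqref{eq:theta_X_i,j} to read off its at most three segments; each substitution costs $O(1)$, so this step uses $O(\sqrt{n})$ time and outputs $O(\sqrt{n})$ segments in total. In the third step I apply Corollary~\ref{cor:upperenvelope} to this family of $O(\sqrt{n})$ two-dimensional line segments, obtaining in $O(\sqrt{n}\log n)$ time the break point of the upper envelope that minimizes it; its $y$-coordinate is an optimal location for the sink on the edge and its height is the optimal value of {\sf SLE}.

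Summing the three bounds gives a total of $O(n)+O(\sqrt{n})+O(\sqrt{n}\log n)=O(n)$, because $\sqrt{n}\log n=o(n)$, establishing the claimed running time.

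The correctness rests entirely on results already proved: the reduction \eqref{eq:new_Theta^*(y)} ensures that restricting attention to the dominant sets in $\mathcal{X}$ loses no optimal solution, and Corollary~\ref{cor:upperenvelope} ensures the continuous minimum of the envelope is attained at one of its break points. Consequently there is no substantial mathematical obstacle remaining in this theorem itself; the only delicate point is keeping the first step linear, and this is exactly why Lemma~\ref{lem:w(X_{i,j})} must be used---computing each $w(X_{i,j})$ independently would cost $O(n\sqrt{n})$, whereas the incremental dynamic program shares work across neighboring sets to stay within $O(n)$.
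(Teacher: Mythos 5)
Your proposal is correct and follows essentially the same route as the paper's own argument: the identical three-step algorithm (compute $w(X_{i,j})$ via Lemma~\ref{lem:w(X_{i,j})}, instantiate the at most three segments of each $\Theta(X_{i,j},y)$ via \eqref{eq:theta_X_i,j}, then minimize the upper envelope via Corollary~\ref{cor:upperenvelope}), with the same $O(n)+O(\sqrt{n})+O(\sqrt{n}\log n)=O(n)$ accounting. You also correctly identify that the first step is the bottleneck and that the dynamic-programming lemma is what keeps it linear, which matches the paper's discussion exactly.
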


Solving problem {\sf SLE} for each edge, the 1-sink location problem for all edges can be solved in $O(n^2)$ time.
%
%
In general, the evacuation completion time can be computed in $\tilde{O}(m^2 k^4 + m^2 nk)$ time using the algorithm developed by Schlöter et al.~\cite{schloter2022faster}.
Here, $m$ is the number of edges and $k$ is the total number of sources and sinks, and the $\tilde{O}$ omits the logarithmic factor from Big~$O$ notation.
Adapting this algorithm to our problem, the evacuation completion time for each node can be computed in $\tilde{O}(n^6)$ time since $m=O(n)$ and $k=O(n)$.
If we focus on the sink location on only nodes, by applying this algorithm to each node, the optimal 1-sink location can be computed in $\tilde{O}(n^7)$ time. 
Therefore, we obtain the following corollary.

\begin{corollary}
Given a dynamic flow grid network of $N \times N$ nodes (without any sink) with uniform edge capacity and transit time, 
the 1-sink location problem can be solved in $\tilde{O}(n^7)$ time.
\end{corollary}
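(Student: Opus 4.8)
The plan is to split the search for the optimal $1$-sink location into two exhaustive cases according to whether the sink sits on a node or strictly inside an edge, handle each case with its own subroutine, and return the better of the two. Since every point of the network lies either on a node or in the interior of exactly one edge, taking the minimum over both cases yields the global optimum.

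First I would dispose of the edge case. For a fixed edge between adjacent nodes $p$ and $q$, Theorem~\ref{thm:main} computes the best interior sink location by solving problem {\sf SLE} in $O(n)$ time. The grid has $N \times N = n$ nodes and hence $O(n)$ edges, so running {\sf SLE} on every edge costs $O(n) \times O(n) = O(n^2)$ time in total, producing the best sink location lying in the interior of some edge together with its evacuation completion time.

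Next I would handle the node case by brute force over the $n$ candidate nodes. For each node $v$ I place the sink at $v$, set its demand to $-w(V)$, and compute the resulting evacuation completion time by solving the quickest transshipment problem on the fixed network. Using the algorithm of Schl\"oter et al.~\cite{schloter2022faster}, whose running time is $\tilde{O}(m^2 k^4 + m^2 n k)$, and substituting $m = O(n)$ for the number of edges and $k = O(n)$ for the total number of sources and sinks, each such evaluation takes $\tilde{O}(n^2 \cdot n^4 + n^2 \cdot n \cdot n) = \tilde{O}(n^6)$ time. Iterating over all $n$ nodes therefore costs $\tilde{O}(n^7)$ time, and keeping the node that attains the minimum completion time gives the best node location.

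Finally I would compare the best edge location (found in $O(n^2)$ time) with the best node location (found in $\tilde{O}(n^7)$ time) and output whichever yields the smaller evacuation completion time. The overall running time is dominated by the node case, giving $\tilde{O}(n^7)$ as claimed. There is essentially no analytic obstacle here, since the combinatorial difficulty has already been absorbed into Theorem~\ref{thm:main} and the cited quickest-transshipment routine; the only point requiring care is verifying that the two cases are jointly exhaustive and confirming that the bottleneck is indeed the $n$-fold invocation of the transshipment algorithm rather than the edge sweep.
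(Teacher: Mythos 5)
Your proposal is correct and matches the paper's own argument: the paper likewise combines the $O(n^2)$ edge sweep (running {\sf SLE} on each of the $O(n)$ edges via Theorem~\ref{thm:main}) with a brute-force evaluation of the quickest transshipment algorithm of Schl\"oter et al.~\cite{schloter2022faster} at each of the $n$ nodes, each taking $\tilde{O}(n^6)$ time since $m=O(n)$ and $k=O(n)$, so the node case dominates and gives $\tilde{O}(n^7)$ overall. No gaps; your added remark that the node and edge cases are jointly exhaustive is exactly the (implicit) correctness argument in the paper.
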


\section{Conclusion}\label{sec:conclusion}
The sink location problem is a kind of evacuation problems that seeks to determine the location of sinks that minimize the time it takes for each evacuee to arrive at one of the evacuation facilities.
In this paper, we proposed a polynomial-time algorithm for the 1-sink location problem on a dynamic flow grid network with uniform edge capacity and transit time.
This is the first polynomial-time algorithm for networks that contain a number of cycles.
We remark that our provided approach can be extended to the 1-sink location problem on $M \times N$ grid networks, which gives a polynomial-time algorithm even when $M \neq N$.

It would be interesting to develop polynomial-time algorithms for the sink location problems on more complex networks, especially, a grid networks with multiple number of capacities and transit times.
In addition, from the viewpoint of real world applications, developing polynomial-time algorithms for grid networks with holes or the case of multiple sinks is a future challenge.



%
%
%
\bibliographystyle{splncs04}
\bibliography{sample}
%


\end{document}